%% file: main.tex
\pdfoutput=1
\documentclass[11pt]{article}
\usepackage[left=2cm, right=2cm, top=2.5cm, bottom=2.5cm]{geometry}

\usepackage[utf8]{inputenc}
\usepackage[T1]{fontenc}
\usepackage{lmodern}
\usepackage{microtype}
\usepackage{pgfplots}
\usepackage{amsmath, amsfonts,amsthm,amssymb,bm,bbm,mathabx,bbold}
\usepackage{balance}
\usepackage{enumitem}
\usepackage{tcolorbox}
\tcbuselibrary{theorems}
\usepackage{thmtools} 
\usepackage{mathtools}
\usepackage{graphicx}
\usepackage{xcolor}
\usepackage{booktabs}
\usepackage{multirow}
\usepackage{array}
\usepackage{enumitem}
\usepackage{tikz}
\usepackage{circuitikz}
\usepackage{tcolorbox}
\usepackage{thm-restate}
\usepackage{url}
\usepackage[hidelinks]{hyperref}
\usepackage[nameinlink,capitalise,noabbrev]{cleveref}
\usepackage{natbib}
\usepackage{tikz}
\usepackage{pgfplots}
\usepackage{amsmath,bm,bbm}
\usepackage{balance}
\usepackage{enumitem}
\usepackage{tcolorbox}
\tcbuselibrary{theorems}
\usepackage{thm-restate}
\DeclareMathOperator*{\argmax}{arg\,max}

\usepackage{thmtools} 
\usepackage{thm-restate}
\usepackage{svg}
\usepackage{soul,natbib}
\makeatletter
\renewcommand\paragraph{\@startsection{paragraph}{4}{\z@}%
            {-2.5ex\@plus -1ex \@minus -.25ex}%
            {1.25ex \@plus .25ex}%
            {\normalfont\normalsize\bfseries}}
\makeatother
\newtcbtheorem{centreBox}{}{enhanced,drop shadow={black!50!white},
  coltitle=black,
  top=0.3in,
  attach boxed title to top left=
  {xshift=1.5em,yshift=-\tcboxedtitleheight/2},
  boxed title style={size=small,colback=pink}
}{summary}

\usepackage[T1]{fontenc}
\usepackage{graphicx}
\usepackage{balance}
\newcount\Comments  % 0 suppresses notes to selves in text
\definecolor{darkgreen}{rgb}{0,0.6,0}
\newcommand{\kibitz}[2]{\ifnum\Comments=1{\textcolor{#1}{{#2}}}\fi}
\newcommand{\gan}[1]{\kibitz{blue}{[GG: #1]}}

\newcommand{\sg}[1]{\kibitz{green}{[SG :#1]}}

\newcommand{\red}[1]{{\leavevmode\color{red}{#1}}}

\newcommand\toref{\red{[REF]}}

\newtheorem{definition}{Definition}

\newtheorem{example}{Example}
\newcommand{\tr}{\texttt{tr}}
\newtheorem{theorem}{Theorem}
\title{Multi-Sender Bayesian Persuasion with Imperfect Information}

\author{
   Andra Siva Sai Teja   \thanks{    IIT Hyderabad  \texttt{ai22mtech11001@iith.ac.in} }
\and 
  Ganesh Ghalme\thanks{ IIT Hyderabad  \texttt{ganeshghalme@ai.iith.ac.in}}
    \and
    Sujit Gujar  \thanks{ IIIT Hyderabad \texttt{sujit.gujar@iiit.ac.in}}
}

\date{\today}

\begin{document}
\maketitle

\begin{abstract}
We study a multi-sender Bayesian persuasion problem with one receiver and several strategic senders. The underlying ground state has multiple components, each privately observed by a different sender, while the receiver holds a common prior over the joint state space. Senders simultaneously choose signaling policies, and the receiver takes an action based on the posterior induced by  signals;  each sampled independently  from  sender signalling policies.
 We analyze the game induced by the receiver’s  \emph{straightforward policy}, which selects a receiver-optimal action at every posterior. In particular, we characterize the senders’ best responses under straightforward policy and identify conditions on the prior  which induce  fully informative equilibrium; i.e., truthfully reporting the  ground truth is an equilibrium strategy for every sender.   These conditions capture cases where senders’ incentives are sufficiently aligned for full revelation to arise without additional receiver commitment. 

The important contribution of this paper is to analyze games induced by a more general (possibly randomized) class  of  action policies committed by the receiver before senders choose their signalling strategies. We show that this commitment power fundamentally changes the problem. In particular, we show that for any prior over the joint state space, the receiver can construct action policies that maximize her payoff while ensuring   a fully informative equilibrium.
\end{abstract}

\input{intro}

\input{lit}
\input{model}

\input{warmup}
\input{singleUnit}
\input{multiple}
\input{conclusion}
%see standard nips bib style?
\bibliography{reference}

\input{appendix}
\end{document}

%% file: intro.tex
\section{Introduction} \label{sec:intro}

\emph{Bayesian persuasion} (BP), introduced by \citet{kamenica2011bayesian}, is a framework characterized by asymmetric information between two parties: a sender and a receiver. In this setting, a more informed sender (\textit{he})  communicates with the receiver (\textit{she}) by sending a signal. While the true state of the world is \emph{privately observed} by the sender, he can strategically reveal this information to the receiver, who takes a payoff-relevant action that affects both parties. The receiver is considered \emph{Bayesian rational}; that is, she takes an action based on the posterior beliefs induced by a signal sampled from the \emph{signaling policy} committed by the sender. We call such action a \emph{straightforward policy}. An important question is whether the sender can effectively persuade the receiver to take an action that benefits him. That is, it maximizes the sender's expected payoff under the posterior belief distribution induced by a signaling policy, given the receiver's straightforward action policy.  The classical concavification result of \citet{kamenica2011bayesian} answers this question positively by fully characterizing the payoffs under induced posterior distributions.

In this paper, we consider multiple senders and one receiver.   We refer to this setting as \texttt{Comp-BP} or \emph{Competitive Bayesian Persuasion}. In \texttt{Comp-BP}, the receiver's action affects the payoff of all the players. This leads to a situation in which a given sender's signaling strategy depends on the signaling strategies of other senders through the posterior induced by the joint signal. The interplay among multiple senders' pursuits makes the problem challenging. 

The \texttt{Comp-BP} setting arises in many real-world scenarios involving competition under asymmetric information. One example is two sellers offering similar goods with privately known quality, each strategically signaling to attract a buyer. Another example is job recommendation processes, where advisors endorse their own students for a single position while having limited information about competing candidates. Similar dynamics appear in targeted advertising, political campaigns, and lobbying, where multiple parties selectively reveal information to influence a decision-maker. In all these settings, agents compete through strategic signaling under incomplete information. Our setting departs from the classical Bayesian persuasion setting in three important aspects.

 \begin{enumerate}[leftmargin=*]
 \item 
\textbf{Decomposable ground truth:}  
In our setting, the ground truth is inherently multi-dimensional, with each sender observing and signaling a distinct component. This contrasts with much of the existing literature on multiple senders, which typically assumes that all senders observe the same underlying ground truth \cite{gentzkow2017bayesian, lirong}. 
\item \textbf{Partially observable ground truth:} Each sender observes his ground truth independently, though the realization for each may be correlated; a sender knows the ground truth of the other sender only probabilistically through prior. Decomposable ground truth and partially observable ground truth conditions imply that the ability of the individual sender to influence the receiver's actions is less than that in the classical BP setting. %\sg{As all the senders observe different components of the ground truth, the problem of truthful revelation poses more challenges; other senders' reports are not useful in penalizing a sender if he is not truthful.} \gan{They may be correlated, so not all useless..}
\item \textbf{Active receiver:} The receiver's role is passive in the classical BP setting as she always follows a straightforward policy. However, the receiver actively commits to an action policy upfront in our setting (see Section~3.3). An  {\em action policy} is a distribution over actions given a posterior distribution on the ground truth. Every receiver's action policy induces a game between senders. It is interesting how the receiver's choice of action affects the equilibrium behavior induced.   This approach of \texttt{Comp-BP} can also be viewed as an \emph{indirect mechanism design without money}. 
\end{enumerate}

\textbf{Our Contributions:}
In Section~\ref{sec:settings} we introduce \texttt{Comp-BP} setting, where the receiver commits to an action policy to maximize her expected utility. In Section~\ref{sec:straightforward-warmup}, we first show that a chatercerization result, the conditions on prior that ensure a straightforward policy induces a Fully Informative Equilibrium (FIE)(Theorem~\ref{thm:straightforward-fie-characterization}). (In FIE, truthful revelation is a Bayes-Nash equlibrium for the senders.) We then consider randomized action policies in Section~\ref{sec:truthful} and show that, for any prior, the receiver can induce truthful equilibrium behavior among senders (Theorem~\ref{thm:receiver_policy}) while attaining the maximum possible payoff (Theorem~\ref{thm:receiver-optimal-gpa}). Finally, in Section~\ref{sec:mult}, we generalize the \texttt{Comp-BP} setup to more than two senders and provide conditions under which sender policies ensure FIE (Theorem~\ref{thm:mult}) and achieve optimal receiver utility (Theorem~\ref{thm:mult2}).

%% file: lit.tex
\section{Related Work} \label{sec:lit_review}

The seminal work of \citet{kamenica2011bayesian} introduced Bayesian persuasion and characterized optimal information disclosure via concavification. This framework has since been extended in many directions; see \cite{bergemann2019information,Dughmi2017AlgorithmicIS} for surveys. Related extensions include multiple receivers \citep{kamenica2019bayesian,Alonso26,kolo}, dynamic persuasion \citep{Ely2017beeps,WU2023105763}, algorithmic persuasion \citep{ARIELI2019185}, fair persuasion \citep{banerjee2024majorizedbayesianpersuasionfair}, and robust information design \citep{bergemann2013robust}.

Our work is closest to the literature on persuasion with multiple senders and competitive disclosure. \citet{gentzkow2017bayesian} study how competition among senders affects information revelation and show that its effect depends on the information environment. \citet{milgrom} show that competition can promote information disclosure, while \citet{dranovejin} study competitive disclosure in repeated firm-quality settings. \citet{lirong} shows that full disclosure can arise when senders have opposing preferences. Other multi-sender models study how adding senders affects information revelation and equilibrium disclosure. In contrast, our model has a decomposable ground truth: each sender privately observes only one component of the state, and the receiver's action is chosen from sender-favorable alternatives. Thus, the senders compete not merely by revealing a common state, but by strategically signaling different privately observed state components.

Finally, our receiver-commitment approach is related in spirit to mechanism design. However, unlike standard mechanism design, senders' preferences are not type-dependent in the usual allocation sense: each sender always prefers the receiver to take the action favorable to that sender, regardless of the realized state. Existing work connecting Bayesian persuasion and mechanism design, such as \citet{castiglioni}, primarily uses type-reporting to handle receiver-side private information and computational intractability. Our approach instead treats the receiver as a mechanism designer who commits to an action policy that maps posteriors to randomized actions, aiming to induce truthful sender signaling and maximize the receiver's payoff. Also note that, in this setting, the receiver is a mechanism designer who itself is part of the mechanism and obtains utility from the mechanism. It is also interesting to note that, in BP, both senders and the receiver gain utility even though there is no explicit monetary transfer as part of BP.  

%% file: model.tex
\section{Setting and Preliminaries} \label{sec:settings}

%\subsection{Classical Bayesian Persuasion (BP) Setting}
We first present the two-sender version of \texttt{Comp-BP}; the extension to  more than two  senders is given in Section~\ref{sec:mult}. There are two senders, indexed by \(T\in\{A,B\}\), and one receiver. Each sender \(T\) has a binary ground state \(\omega_T\in\Omega_T:=\{H,L\}\), and the joint state is \(\omega=(\omega_A,\omega_B)\in\Omega:=\Omega_A\times\Omega_B\). The receiver has a common prior \(\mu_0\in\operatorname{int}(\Delta(\Omega))\). Each sender observes only his own state and commits to a signaling policy \(\pi_T:\Omega_T\to\Delta(S_T)\), where \(S_T:=\{h,\ell\}\). Given a signaling profile \(\pi=(\pi_A,\pi_B)\), signals \(s_T\sim\pi_T(\cdot\mid \omega_T)\) are generated independently across senders conditional on   $\omega_T$. After observing \(s=(s_A,s_B)\), the receiver forms the posterior \(\mu_\pi(\cdot\mid s)\) by Bayes' rule.
\begin{equation}
\mu_\pi(\omega|s)
=
\frac{\pi_A(s_A|\omega_A)\pi_B(s_B|\omega_B)\mu_0(\omega)}
{\sum_{\omega'} \pi_A(s_A|\omega'_A)\pi_B(s_B|\omega'_B)\mu_0(\omega')}    
\end{equation}

\begin{figure}[t]
\centering
\resizebox{0.99\linewidth}{!}{%
\begin{tikzpicture}[
    >=Stealth,
    stepnode/.style={circle, draw=black!75, fill=black!4, minimum size=6.8mm,
        inner sep=0pt, font=\bfseries\small},
    card/.style={rounded corners=2pt, draw=black!70, fill=black!3, very thick,
        align=center, inner xsep=5pt, inner ysep=5pt, text width=2.95cm,
        minimum height=1.12cm},
    lane/.style={draw=black!16, line width=0.45pt},
    flow/.style={->, line width=0.85pt, draw=black!70},
    role/.style={anchor=east, font=\bfseries}
]

% Coordinates
\def\xA{0}
\def\xB{3.05}
\def\xC{6.10}
\def\xD{10.15}
\def\xE{12.15}
\def\yR{0}
\def\yN{-1.00}
\def\yS{-3.30}
\def\yT{-5}

% Swimlanes
\node[role] at (-1.75,\yR) {Receiver};
%\node[role] at (-1.75,\yN) {Nature};
\node[role] at (-1.75,\yS) {Senders};
\draw[lane] (-1.55,\yR) -- (13.70,\yR);
%\draw[lane] (-1.55,\yN) -- (13.70,\yN);
\draw[lane] (-1.55,\yS) -- (13.70,\yS);

% Step nodes
\node[stepnode] (s1) at (\xA -0.3,\yT) {1};
\node[stepnode] (s2) at (\xB - 0.5,\yT) {2};
\node[stepnode] (s3) at (\xC + 0.15,\yT) {3};
\node[stepnode] (s4) at (\xD - 0.3,\yT) {4};
\node[stepnode] (s5) at (\xE + 0.8,\yT) {5};
\draw[flow] (s1) -- (s2);
\draw[flow] (s2) -- (s3);
\draw[flow] (s3) -- (s4);
\draw[flow] (s4) -- (s5);
%\node[anchor=west] at (5.5, - 6) {Timeline};
Receiver
% Main cards
\node[card] (n1) at (\xA,\yR) {Commit to an action policy\\[-0.8mm]
$f:\Delta(\Omega)\to\Delta(\mathcal A)$};

\node[card] (n2) at (\xB - 0.8,\yS) { Commit signaling policies  $\pi_A, \pi_B$ \\[-0.5mm]
$\pi := (\pi_A,\pi_B)$};

\node[card] (n3) at (\xC + 0.1 ,\yS) { Ground state realized\\[-0.5mm]
$\omega=(\omega_A,\omega_B) \sim \mu_0$};

\node[card] (n4) at (\xD ,\yS) {Signals realized\\[-0.5mm]
$s_A\sim \pi_A(\omega_A)$,\; $s_B\sim \pi_B(\omega_B)$};

\node[card] (n5) at (\xE,\yR) {Posterior update and action\\[-0.5mm]
$a^\star \sim f(\mu_\pi (. |s))$};

% Time-flow between cards
\draw[flow] (n1.south) to[out=0,in=150] (n2.west);
\draw[flow] (n2.east) to[out=-30,in=0] (n3.west);
\draw[flow] (n3.east) to[out= 300,in=0] (n4.west);
\draw[flow] (n4.east) to[out=25,in=205] (n5.south);

% Sender labels under the sender-policy stage
%\node[draw=black!45, rounded corners=2pt, fill=white, font=\footnotesize,
      minimum width=1.25cm, minimum height=0.42cm] at (\xC-8.50,-2.22) {Sender A};
%\node[draw=black!45, rounded corners=2pt, fill=white, font=\footnotesize,     minimum width=1.25cm, minimum height=0.42cm] at (\xC  - 7.1,-2.22) {Sender B};

\end{tikzpicture}%
}
\caption{Timeline of the competitive Bayesian persuasion game. 1) The receiver commits to an action policy, 2)  senders commit to  signaling policies, 3) ground truth is realized 4) signal generated and finally, 5) the receiver  takes an action according to the committed policy and induced posterior.}
\label{fig:comp-bp-timing}
\end{figure}
After computing the posterior $\mu_\pi(\cdot\mid s)$, For notational convenience, whenever the signal is implicit, we write $\mu_{\pi(\cdot\mid s)}$ as $\mu_{\pi}$ and when even $\pi$ is implicit and we refer  to posterior induced by $\pi$ as  $\mu$. the receiver takes an action $a^*\in\mathcal A$ that maximizes her expected utility. For a posterior $\mu_\pi$, write $\mu_\pi^A(\omega_A):=\sum_{\omega_B'\in\Omega_B}\mu_\pi(\omega_A,\omega_B')$, and define $\mu_\pi^B$ analogously.
%For a given posterior distribution $\mu_\pi \in \Delta(\Omega)$ let the posterior marginal distribution over  sender $A$'s grounds states $\omega_A$ is given as   $\mu_{\pi}^A(\omega_A) :=\sum_{\omega_{B}^{'}\in\Omega_{B}} \mu_{\pi}(\omega_A, \omega_{B}^{'})
%$.
%\label{eq:marginal_post}
%\end{equation} 
%Similarly, define $\mu_{\pi}^B(.)$.  
 Let the action space be denoted as $\mathcal{A}=\{a_0,a_A,a_B\}$ be the set of available actions for the receiver. Here, actions $a_T$ denote the preferred/favorable action of sender  $T$, and  $a_0$ is the non-preferred action of both senders. For clarity, we restrict attention to a canonical action space with one favorable action per sender and one safe action. Richer action spaces can be reduced to this form when actions are payoff-equivalent within each sender-favorable class %To see why this follows without loss of generality, observe that a richer action set can be partitioned among actions favorable to each sender ($a_T$ for agent $T$) and a set of actions that are not favorable to any sender (represented by $a_0$). 
 \footnote{Throughout the paper, we will also assume, without loss of generality, that $a_A \neq a_B$.}.   We assume that the signaling policies \(\pi_T\) are publicly committed.  The goal is to design a policy that maps reported signals to actions such that the senders commit to desirable signaling strategies. We begin with a simple observation. 
%We begin with a simple observation that if  $\mu_0$ is represented by a product distribution, that is $\mu_0 \in   \bigtimes_T \Delta(\Omega_T)$,  the problem of Bayesian persuasion  would be same as considering the classical BP setting twice, once for each sender.
\begin{restatable}{observation}{obsone}
Let $\mu_0$ be a product distribution; i.e., the random variables $\omega_A$, $\omega_2$ are statistically independant,  $\pi = (\pi_A, \pi_B)$ be a signaling strategy profile and $\mu_\pi$ be an induced posterior distribution. Then, for any $\omega \in \Omega$, we have  $\mu_\pi(\omega) = \mu_{\pi}^A(\omega_A) \cdot  \mu_{\pi}^B (\omega_B). $ 
\label{obs:prod_dist}
\end{restatable}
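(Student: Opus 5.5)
The plan is a direct computation with Bayes' rule from the displayed posterior formula. The only structure used is that the numerator factors into a function of $\omega_A$ times a function of $\omega_B$: independent signals give the likelihood factor $\pi_A(s_A\mid\omega_A)\pi_B(s_B\mid\omega_B)$, and the product prior gives $\mu_0(\omega)=\mu_0^A(\omega_A)\mu_0^B(\omega_B)$, where $\mu_0^T$ denotes the prior marginals. So fix a signal realization $s=(s_A,s_B)$ occurring with positive probability, and substitute the product form of $\mu_0$ into the numerator and the denominator of the posterior formula.

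First, I will regroup the numerator as
\[
g_A(\omega_A)\,g_B(\omega_B),\qquad g_A(\omega_A):=\pi_A(s_A\mid\omega_A)\mu_0^A(\omega_A),\quad g_B(\omega_B):=\pi_B(s_B\mid\omega_B)\mu_0^B(\omega_B).
\]
Second, since $\Omega=\Omega_A\times\Omega_B$, the sum over $\omega'$ in the denominator splits as $\bigl(\sum_{\omega_A'}g_A(\omega_A')\bigr)\bigl(\sum_{\omega_B'}g_B(\omega_B')\bigr)=:Z_AZ_B$. This is positive because $\mu_0$ has full support and $s$ has positive probability. Hence
\[
\mu_\pi(\omega\mid s)=\frac{g_A(\omega_A)}{Z_A}\cdot\frac{g_B(\omega_B)}{Z_B}.
\]

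Third, I will identify each factor with the corresponding posterior marginal. Summing the last display over $\omega_B\in\Omega_B$, the second factor sums to $1$, so $\mu_\pi^A(\omega_A)=g_A(\omega_A)/Z_A$. Symmetrically, $\mu_\pi^B(\omega_B)=g_B(\omega_B)/Z_B$. Substituting these back gives $\mu_\pi(\omega)=\mu_\pi^A(\omega_A)\mu_\pi^B(\omega_B)$ for every $\omega\in\Omega$, which is the claim.

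There is no real obstacle here. The only points needing care are restricting to signals $s$ with positive probability, so that the posterior is well defined and $Z_A,Z_B>0$, and explicitly invoking the assumption that signals are drawn independently across senders conditional on $\omega_T$. That assumption is exactly what makes the likelihood factor, and together with the product prior it gives the factorization.
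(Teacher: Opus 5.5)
Your proposal is correct and follows the paper's proof essentially step for step: substitute the product prior into Bayes' rule, factor the denominator over $\Omega_A\times\Omega_B$, and identify the two resulting factors with the marginal posteriors. Your extra step of summing over $\omega_B$ to confirm that $g_A/Z_A$ is exactly the marginal $\mu_\pi^A$ as defined in the statement is slightly more rigorous than the paper, which simply asserts the identification. One small correction: $Z_AZ_B=P_\pi(s)>0$ already follows from $s$ having positive probability, so full support of $\mu_0$ is not needed there.
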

\begin{proof}
Since $\mu_0$ is a product prior, write $\mu_0(\omega_A,\omega_B)=\mu_0^A(\omega_A)\mu_0^B(\omega_B)$. For any signal profile $s=(s_A,s_B)$ with positive probability, Bayes' rule gives
\[
\mu_\pi(\omega_A,\omega_B\mid s)
=
\frac{\pi_A(s_A\mid \omega_A)\pi_B(s_B\mid \omega_B)\mu_0^A(\omega_A)\mu_0^B(\omega_B)}
{\sum_{\omega_A',\omega_B'}\pi_A(s_A\mid \omega_A')\pi_B(s_B\mid \omega_B')\mu_0^A(\omega_A')\mu_0^B(\omega_B')}.
\]
The denominator factors as
\[
\left(\sum_{\omega_A'}\pi_A(s_A\mid \omega_A')\mu_0^A(\omega_A')\right)
\left(\sum_{\omega_B'}\pi_B(s_B\mid \omega_B')\mu_0^B(\omega_B')\right).
\]
Therefore,
\[
\mu_\pi(\omega_A,\omega_B\mid s)
=
\frac{\pi_A(s_A\mid \omega_A)\mu_0^A(\omega_A)}
{\sum_{\omega_A'}\pi_A(s_A\mid \omega_A')\mu_0^A(\omega_A')}
\cdot
\frac{\pi_B(s_B\mid \omega_B)\mu_0^B(\omega_B)}
{\sum_{\omega_B'}\pi_B(s_B\mid \omega_B')\mu_0^B(\omega_B')}.
\]
The two factors are precisely the marginal posteriors $\mu_\pi^A(\omega_A\mid s_A)$ and $\mu_\pi^B(\omega_B\mid s_B)$, respectively. Hence
\[
\mu_\pi(\omega_A,\omega_B\mid s)=\mu_\pi^A(\omega_A\mid s_A)\mu_\pi^B(\omega_B\mid s_B).
\]
This proves the claim.
\end{proof}

Observation~\ref{obs:prod_dist}  shows that under a product prior, independently chosen signaling policies induce product posteriors. However, this does not make the strategic interaction disappear under a straightforward receiver policy: the receiver still compares the posterior marginals of different senders when choosing among \(a_A,a_B\), and hence each sender's payoff can depend on the other sender's signal. Thus product priors simplify the posterior structure but do not, by themselves, guarantee truthful signaling or reduce the game to independent single-sender persuasion problems.
The strategic difficulty in \texttt{Comp-BP} arises not only from correlation in the prior, but also from competition for the receiver's action. This competition remains present even under product priors, as Example~\ref{ex:straightforward-not-fie-any-p} in Section~\ref{sec:straightforward-warmup} shows.
%  Observation~\ref{obs:prod_dist} (proof given in Appendix~\toref{}) follows from the fact that the individual agents choose their strategies independently. This observation implies that for any ground truth $\omega$, the posterior distribution is also a product distribution over posterior marginal distributions. Hence, individual senders can choose a signaling policy that maximizes the marginal distribution, and this problem is the same as solving two individual \textit{single sender-single receiver} Bayesian persuasion problems. If $\mu_0$ is not a product distribution,  the problem is interesting and it is the focus of the paper.  Next, we introduce \texttt{Comp-BP}, a mechanism design based approach to competitive Bayesian persuasion. 

\subsection{\texttt{Comp-BP} Mechanism} \label{sec:comp-bp_setup}
As mentioned previously, we consider a policy-based receiver. Formally, the receiver first commits to an action policy $f: \Delta(\Omega) \rightarrow \Delta(\mathcal{A})$ which maps posterior on signals to randomized actions. Next, the  Senders, with a-priori knowledge of $f$, commit to \emph{equilibrium}  signaling strategies $ \pi_A$ and $\pi_B$, respectively.   
The signaling strategy profile $\pi$ induces a posterior distribution $\mu_{\pi}$, and the receiver takes action  $a \in \mathcal{A}$ according to the committed policy, i.e.,  $a \sim f(\mu_{\pi})$ (see Figure~\ref{fig:comp-bp-timing}). Note that every action-policy $f$ induces a  game among senders where each agent commits to a signaling policy that maximizes its (expected) utility given the strategies played by other agents under $f$.
\subsection{Utility Structure and Induced Equilibria} 

Given receiver's action policy $f$ and the signaling strategies $\pi= (\pi_A,\pi_B)$, the expected payoff of the receiver is given by 
\begin{align}
    \widehat{u}(\pi;f)=\mathbb{E}_{\omega\sim\mu_{0}}\mathbb{E}_{s\sim \pi(\omega)} \mathbb{E}_{a\sim f(\mu_{\pi}(.|s))} u(a,\omega)
    \label{eq:exp_utility_receiver}
\end{align}
 
Note that the receiver's utility function $u(.)$ depends on both the action taken by the receiver and the underlying ground truth. This utility depends on the signaling policy via the receiver's action policy $f$. Throughout the paper, we consider the following utility structure. 
\begin{equation}
    u(a, \omega) = 
    \begin{cases} 
        u^+>0 & \text{ if } a=a_T \text{ and } \omega_T=H \\
        -u^+ & \text{ if } a=a_T \text{ and } \omega_T=L \\
        0 & \text{ otherwise. }
    \end{cases} 
    \label{eq:receiverUtilityCR}
\end{equation}

The above utility structure captures the fact that the receiver takes a favorable action of agent $T$ only under some state realizations of the ground truth and that there is a sero-utility safe action. % for the receiver with zero utility. %\st{In particular, $a_A$ is optimal only under states ${HH}$ and ${HL}$, whereas it gives a negative payoff to the receiver under $LH$ and $LL$.  } 

Each sender $T$ derives a positive utility of $v_T^+$ if the receiver takes sender $T$'s favorable action %\sg{this footnote not required as we explain it again after sender utility}\footnote{ In our two sellers single buyer example,  the optimal action for seller $T$ corresponds to the buyer's decision to procure the item from seller $T$.} 
and  0 otherwise. In particular, $v_T(a)=v_T^+$, if $a=a_T$, else it is $0$.
Thus, the expected payoff of sender $T$ is 
\begin{align}
    \widehat{v}_T(\pi_T,\pi_{-T};f)=\mathbb{E}_{\omega\sim\mu_{0}}\mathbb{E}_{s\sim \pi(\omega)} \mathbb{E}_{a\sim f(\mu_{\pi}(.|s))} v_T(a)
    \label{eq:exp_utility_sender}
\end{align}

In our two-seller, one-buyer example, the buyer derives positive utility from a high-quality purchase ($u^+ >0$), negative utility from a low-quality purchase ($-u^+$), and zero utility from not purchasing. Furthermore, the seller $T$ receives positive utility $v_T^+$ if his product is purchased (irrespective of its quality) and zero otherwise. 

To define the policy, we need a tie-breaking rule. Throughout the paper, we use the following: A tie between  $a_0$ and $a_T$ is resolved in favor of $a_T$. Also, a tie between two sender-preferred actions, $a_A$ and $a_B$, is resolved randomly: in favor of $A$ with probability $p$, in favor of $B$ otherwise.  We denote such a tie-breaking policy as $f^p$. 
\begin{definition}[Straightforward Action Policy]
We call the receiver's action policy $f$ {\em straightforward} if, for any $\pi$, the receiver takes the optimal action under the posterior distribution. That is,  
\begin{equation}
f(\mu_{\pi}(.|s)) \in \argmax_{a \in \mathcal{A}} \mathbb{E}_{\omega \sim \mu_{\pi}(.|s)} u(a,\omega).   
\end{equation}
\label{def:straightforward}
\end{definition}
 %It is worth noting that given the posterior distribution, the straightforward policy is deterministic. 
\paragraph{Equilibrium} 
For a given commitment $f$, rather than reporting the ground truth honestly, a sender would attempt to commit to a reporting strategy that maximizes its payoff $\widehat{v}_T(\cdot)$. Such strategic behavior may induce an equilibrium for the senders, which we call {\em $f$-induced Equilibrium}. %The $f$-induced equilibrium matches with the Nash equilibrium concept when utilities are represented by $\widehat{v}_T(.)$ and strategies are represented by $\pi_T$. As the  strategy space of senders  is uncountable the Nash's theorem \toref{} does not guarantee the existence of Nash equilibrium. 
\begin{definition}[$f$-induced Equilibrium]
A signaling strategy profile  $\pi^\star=(\pi_A^\star,\pi_B^\star)$ is called an equilibrium of a two-player game induced by action policy $f$ if, for all $ T \in \{A, B\}$ and for all $\pi_T \in \Delta(\Omega)$, we have  $\widehat{v}_T(\pi_T^\star,\pi_{-T}^\star;f)\geq \widehat{v}_T(\pi_T,\pi_{-T}^\star;f)$.
\label{def:equi}
\end{definition} 

We look at the game from the receiver's perspective, who is a  mechanism designer. An immediate question in this context is whether there exists an $f$ such that the induced game admits an  equilibrium.

\noindent {\em Question 1:  Does there exist an $f$ such that the $f$-induced equilibrium  exists?}

A simple class of mechanisms that guarantees equilibrium is a {\em random} mechanism that takes action without considering the signal. It is easy to see that such mechanisms also induces a truthful a.k.a. fully informative equilibrium. That is, the receiver seeks an equilibrium in which senders report the ground truth honestly.

\begin{definition}[Fully Informative Equilibrium (FIE)]
    Let $\pi^{\star}$ be an equilibrium induced by some policy $f$ (as defined in Definition \ref{def:equi}). We call $\pi^{\star}$ a  {\em fully informative (or truthful) equilibrium } (FIE)  if, for all $T $, we have   $\pi^{\star}_T(h|H)=\pi^{\star}_T(\ell|L)=1$.
\end{definition}

%For stability purpose, the receiver is interested in determining policies where $f$-induced equilibrium exists. Note here that the truthful (even non-truthful) $f$-induced equilibrium may not exist in a general Bayesian persuasion setting. 
%\sg{existance of equlibirium n truthful equilibirum are not the same proble. but here, it appears that we say that way}

The goal in this paper is to {\em find} the most favorable (to the receiver) equilibrium. In particular, we are interested in finding $f$ that maximizes the receiver's expected payoff in an induced equilibrium.
 
\noindent {\em   Question 2: Let $\mathcal F$ be the set of action policies that admit an induced equilibrium. Can we characterize policies $f\in\mathcal F$ that maximize $u(\pi^\star;f)$ over $f\in\mathcal F$, where $\pi^\star$ is an $f$-induced equilibrium? }

The most important contribution of this paper is a characterization of a class of mechanisms that induce the fully informative, utility-maximizing equilibrium.

 %\gan{I have made some changes above based on your suggestion. Check if it is coming out nicely now}

%% file: warmup.tex
\section{Inducing FIE under Straightforward Action Policy}
\label{sec:straightforward-warmup}
%As in most of the literature on BP, including the seminal work by \citet{kamenica2011bayesian},
We first analyze the benchmark in which the receiver uses a straightforward action policy. Let $\pi^{\tr}$ denote the truthful signaling profile, where each sender reports his ground state honestly. Under $\pi^{\tr}$, the receiver learns the realized state exactly. We begin by characterizing the straightforward policy under the utility structure of Section~\ref{sec:settings} in Lemma~\ref{lem:straightforward-policy}. We then prove a receiver-payoff upper bound and show that truthful signaling attains it. Finally, we characterize exactly when a straightforward policy induces an FIE and give an example showing that this condition can fail even under a simple prior. We begin by explicitly writing the straightforward action policy for the utility structure under
consideration (Section~\ref{sec:settings}). For any distribution $\mu\in \Delta(\Omega)$, write
$\mu(H\cdot):=\sum_{\omega_B'\in\Omega_B}\mu(H,\omega_B')$ and $
\mu(\cdot H):=\sum_{\omega_A'\in\Omega_A}\mu(\omega_A',H)$  
to denote the posterior marginal probabilities that senders $A$ and $B$, respectively, are in the
favorable/high state. Let $\delta_a$ denote the point mass on action $a$.

\begin{restatable}{lemma}{LemOne}[Straightforward action policy]
\label{lem:straightforward-policy}
Let $p\in[0,1]$,  $\mu\in\Delta(\Omega)$ be a posterior distribution and let $f^p$ be a
straightforward action policy with tie-breaking parameter $p$. Then,
\[
f^p(\mu)=
\begin{cases}
\delta_{a_0},
& \text{if } \mu(H\cdot)<\frac12 \text{ and } \mu(\cdot H)<\frac12, \\[1mm]
\delta_{a_A},
& \text{if } \mu(H\cdot)>\mu(\cdot H) \text{ and } \mu(H\cdot)\ge \frac12, \\[1mm]
\delta_{a_B},
& \text{if } \mu(\cdot H)>\mu(H\cdot) \text{ and } \mu(\cdot H)\ge \frac12, \\[1mm]
[p]\,\delta_{a_A}+ [1-p]\,\delta_{a_B} 
& \text{otherwise.}
\end{cases}
\]
Here the notation $[p]\,\delta_{a_A}+ [1-p]\,\delta_{a_B}$ denotes that  the receiver takes action $a_A$ with probability $p$ and action $a_B$ with probability $1-p$. 
\end{restatable}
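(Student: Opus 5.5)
The plan is to compute the receiver's expected utility of each action under an arbitrary posterior $\mu$ and then apply the tie-breaking rule defining $f^p$. By the utility structure \eqref{eq:receiverUtilityCR}, for $T=A$ we have
\[
\mathbb{E}_{\omega\sim\mu}u(a_A,\omega)=u^+\mu(H\cdot)-u^+\bigl(1-\mu(H\cdot)\bigr)=u^+\bigl(2\mu(H\cdot)-1\bigr),
\]
and symmetrically $\mathbb{E}_{\omega\sim\mu}u(a_B,\omega)=u^+\bigl(2\mu(\cdot H)-1\bigr)$, while $\mathbb{E}_{\omega\sim\mu}u(a_0,\omega)=0$. Since $u^+>0$, comparing actions reduces to comparing the three numbers $2\mu(H\cdot)-1$, $2\mu(\cdot H)-1$ and $0$. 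Equivalently, we compare $\mu(H\cdot)$ and $\mu(\cdot H)$ with each other and with $\tfrac12$.

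Next we go through the cases in the order of the statement.

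\emph{Case 1: $\mu(H\cdot)<\tfrac12$ and $\mu(\cdot H)<\tfrac12$.} Both favorable actions give strictly negative utility. Hence $a_0$ is the unique maximizer and $f^p(\mu)=\delta_{a_0}$.

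\emph{Case 2: $\mu(H\cdot)>\mu(\cdot H)$ and $\mu(H\cdot)\ge\tfrac12$.} Action $a_A$ strictly beats $a_B$ and weakly beats $a_0$. If $\mu(H\cdot)>\tfrac12$, then $a_A$ is the unique maximizer. If $\mu(H\cdot)=\tfrac12$, then $a_A$ ties with $a_0$, and the tie-breaking rule (a tie between $a_0$ and $a_T$ is resolved in favor of $a_T$) selects $a_A$. Either way $f^p(\mu)=\delta_{a_A}$.

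\emph{Case 3: $\mu(\cdot H)>\mu(H\cdot)$ and $\mu(\cdot H)\ge\tfrac12$.} This is symmetric to Case 2 and gives $f^p(\mu)=\delta_{a_B}$.

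\emph{Case 4: the remaining case.} We check that the first three cases exclude everything except $\mu(H\cdot)=\mu(\cdot H)\ge\tfrac12$. Suppose the two marginals are unequal, say $\mu(H\cdot)>\mu(\cdot H)$. Then either $\mu(H\cdot)\ge\tfrac12$, which is Case 2, or $\mu(H\cdot)<\tfrac12$, in which case $\mu(\cdot H)<\tfrac12$ as well and we are in Case 1. The case $\mu(\cdot H)>\mu(H\cdot)$ is symmetric. If the marginals are equal and below $\tfrac12$, we are in Case 1. So the residual set is exactly $\mu(H\cdot)=\mu(\cdot H)\ge\tfrac12$. There $a_A$ and $a_B$ are both maximizers and weakly dominate $a_0$. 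The rule for ties between $a_A$ and $a_B$ then gives the lottery $[p]\,\delta_{a_A}+[1-p]\,\delta_{a_B}$.

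The only real care needed is at the boundary where $\mu(H\cdot)=\mu(\cdot H)=\tfrac12$. Here all three actions tie, and we must apply the two tie-breaking rules consistently. Since each favorable action beats $a_0$ in a tie, $a_0$ is discarded, and the $p$-randomization between $a_A$ and $a_B$ governs the outcome. This is consistent with the fourth branch of the statement, which completes the case analysis.
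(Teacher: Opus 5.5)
Your proof is correct and follows essentially the same route as the paper's: you compute the expected utilities $0$, $u^+(2\mu(H\cdot)-1)$ and $u^+(2\mu(\cdot H)-1)$, then apply the two tie-breaking rules case by case. You are slightly more careful than the paper in two places, which it handles only implicitly: you verify that the residual case is exactly $\mu(H\cdot)=\mu(\cdot H)\ge\tfrac12$, and you invoke the $a_0$-versus-$a_T$ tie-break at the boundary $\mu(H\cdot)=\tfrac12$ in Case 2.
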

\begin{proof}
For any posterior $\mu\in\Delta(\Omega)$, the receiver's expected utility from the safe action is
\[
\mathbb E_{\omega\sim\mu}[u(a_0,\omega)]=0.
\]
For action \(a_A\), the receiver obtains \(u^+\) when \(\omega_A=H\) and \(-u^+\) when
\(\omega_A=L\). Hence
\[
\mathbb E_{\omega\sim\mu}[u(a_A,\omega)]
=
u^+\mu(H\cdot)-u^+(1-\mu(H\cdot))
=
u^+(2\mu(H\cdot)-1).
\]
Similarly,
\[
\mathbb E_{\omega\sim\mu}[u(a_B,\omega)]
=
u^+(2\mu(\cdot H)-1).
\]

Therefore \(a_0\) is strictly optimal whenever both \(\mu(H\cdot)<1/2\) and \(\mu(\cdot H)<1/2\). If \(\mu(H\cdot)\ge 1/2\), then \(a_A\) weakly dominates \(a_0\), and if in addition \(\mu(H\cdot)>\mu(\cdot H)\), then \(a_A\) strictly dominates \(a_B\), so \(a_A\) is optimal. Symmetrically, if \(\mu(\cdot H)\ge 1/2\) and \(\mu(\cdot H)>\mu(H\cdot)\), then \(a_B\) is optimal.

All remaining cases are ties between sender-preferred actions, possibly also involving \(a_0\) when the common marginal is exactly \(1/2\). By the stated tie-breaking rule, ties between \(a_A\) and \(a_B\) are resolved in favor of \(a_A\) with probability \(p\) and in favor of \(a_B\) with probability \(1-p\). Hence the policy is exactly the one stated.
\end{proof}

  Note that the receiver selects $p$
ex-ante, i.e., at the time of committing to the policy $f$ in Step~1 of the protocol.  Our first result shows that the goal of obtaining optimal receiver payoff, as underlined in Question~2,
can be reduced to finding an action policy that induces an FIE. The complete protocol with timeline is given in Figure~\ref{fig:comp-bp-timing}.

\medskip
\noindent

\begin{restatable}{proposition}{PropOne}[Receiver payoff upper bound]
\label{prop:receiver-upper-bound}
For any signaling profile $\pi$ and any action policy $f$ we have 
$u(\pi;f)\le u^+\bigl(1-\mu_0(LL)\bigr).
$ Moreover, truthful reporting under any straightforward policy $f^p$ attains this bound i.e., 
$u(\pi^{\tr};f^p)=u^+\bigl(1-\mu_0(LL)\bigr).$ 
\end{restatable}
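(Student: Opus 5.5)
The plan is to bound the receiver's payoff state by state, and then check that truthful reporting hits each bound. The key observation is that for any fixed ground state $\omega$ and any (possibly randomized) action, the realized receiver utility $u(a,\omega)$ is at most $\max_{a\in\mathcal A} u(a,\omega)$.

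From the utility structure \eqref{eq:receiverUtilityCR}, this maximum is easy to compute:
\begin{itemize}
\item it equals $u^+$ whenever $\omega\neq LL$, since at least one sender $T$ has $\omega_T=H$ and action $a_T$ yields $u^+$;
\item it equals $0$ when $\omega=LL$, since every $a_T$ gives $-u^+$ and $a_0$ gives $0$.
\end{itemize}
Hence, for every $\omega$, every signal realization $s$ and every action distribution $f(\mu_\pi(\cdot\mid s))$,
\[
\mathbb E_{a\sim f(\mu_\pi(\cdot\mid s))}u(a,\omega)\le u^+\mathbbm 1[\omega\neq LL].
\]
Taking expectations over $s\sim\pi(\omega)$ and then $\omega\sim\mu_0$ in \eqref{eq:exp_utility_receiver} gives
\[
\widehat u(\pi;f)\le u^+\sum_{\omega\neq LL}\mu_0(\omega)=u^+\bigl(1-\mu_0(LL)\bigr).
\]
This holds for arbitrary $\pi$ and $f$. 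No property of Bayes' rule is needed here; it is a pointwise domination argument.

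For attainment, take $\pi^{\tr}$. Each realized $s$ identifies $\omega$ exactly, since $\mu_0$ is interior and so every signal profile has positive probability. The posterior is therefore the point mass $\delta_\omega$. We then go through Lemma~\ref{lem:straightforward-policy} case by case:
\begin{itemize}
\item $\omega=LL$: both marginals are $0<\tfrac12$, so $f^p$ plays $a_0$ and the utility is $0$.
\item $\omega=HL$: $\mu(H\cdot)=1>0=\mu(\cdot H)$, so $a_A$ is played and the utility is $u^+$.
\item $\omega=LH$: symmetrically, $a_B$ is played and the utility is $u^+$.
\item $\omega=HH$: both marginals are $1$, so the tie is broken with parameter $p$. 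Either $a_A$ or $a_B$ yields $u^+$, so the expected utility is $pu^++(1-p)u^+=u^+$ regardless of $p$.
\end{itemize}
Summing against $\mu_0$ gives exactly $u^+(1-\mu_0(LL))$.

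The only step needing care is the $HH$ tie case together with the boundary conventions in Lemma~\ref{lem:straightforward-policy}. The point to check there is that the ``otherwise'' branch puts all mass on $\{a_A,a_B\}$, both of which are optimal at $HH$, so randomization costs nothing. Beyond that, the argument is routine bookkeeping with the expectation in \eqref{eq:exp_utility_receiver}.
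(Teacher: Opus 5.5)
Your proposal is correct and follows essentially the same route as the paper. Both bound the receiver's utility state by state ($u^+$ for $\omega\neq LL$, $0$ at $LL$), then check case by case that truthful signaling under $f^p$ attains that bound in every state, with the $HH$ tie costing nothing because both $a_A$ and $a_B$ yield $u^+$.
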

\begin{proof}
For any state $\omega\neq LL$, the receiver's payoff is at most $u^+$, regardless of the action taken. In state $LL$, no sender is in state $H$, so actions $a_A$ and $a_B$ give payoff $-u^+$ and the safe action $a_0$ gives payoff $0$; hence the receiver's payoff is at most $0$. Therefore, for every signaling profile $\pi$ and action policy $f$,
\[
u(\pi;f)
=
\mathbb E_{\omega,s,a}[u(a,\omega)]
\le
u^+\Pr_{\mu_0}[\omega\neq LL]
=
u^+\bigl(1-\mu_0(LL)\bigr).
\]

It remains to show that truthful signaling with $f^p$ attains this upper bound. Under $\pi^{\tr}$, the receiver observes the realized state. If the state is $HL$, the posterior is $\delta_{HL}$ and the straightforward policy chooses $a_A$, yielding payoff $u^+$. If the state is $LH$, it chooses $a_B$, again yielding $u^+$. If the state is $HH$, both $a_A$ and $a_B$ yield payoff $u^+$, so any tie-breaking under $f^p$ yields payoff $u^+$. If the state is $LL$, the policy chooses $a_0$, yielding payoff $0$. Hence
\[
u(\pi^{\tr};f^p)
=
u^+\bigl(\mu_0(HH)+\mu_0(HL)+\mu_0(LH)\bigr)
=
u^+\bigl(1-\mu_0(LL)\bigr).
\]
This completes the proof.
\end{proof}

  The upper bound on  the receiver's payoff is at most $u^+$ in every state except $LL$, and in state
$LL$, the receiver cannot obtain a positive payoff.  Thus,  for every signaling profile $\pi$ and every action policy $f$ we have, $u(\pi;f)\le u^+\bigl(1-\mu_0(LL)\bigr)$.
    The lower bound follows directly because, under truthful signaling,
the receiver learns the realized state: she chooses $a_A$ in state $HL$, $a_B$ in state $LH$, $a_0$
in state $LL$, and randomizes between $a_A$ and $a_B$ in state $HH$, both of which yield payoff
$u^+$.

It is important to note that Proposition~\ref{prop:receiver-upper-bound} does not imply that the
straightforward policy $f^p$ always induces fully informative signaling. Rather, it shows that in
cases where $f^p$ does induce an FIE, the receiver obtains the largest payoff possible under the
given utility structure. Due to Proposition~\ref{prop:receiver-upper-bound}, it is desirable for the receiver that senders report their ground states truthfully. We begin with an example showcasing that straightforward
policies can indeed induce fully informative signaling.
\iffalse 
\begin{example}

Consider the single-buyer, two-seller case with prior $\mu_0=[0.49,0.15,0.15,0.21],$,  
where the coordinates correspond to $(HH,HL,LH,LL)$, and let
$v_A^+=v_B^+=u^+=1$. Furthermore, let the buyer commit to the straightforward action policy
$f^{0.5}$. 
The utility of each sender from truthful reporting is given by   
$v_A(\pi^{\tr};f^{0.5})=v_B(\pi^{\tr};f^{0.5})
=
0.5\cdot 0.49+0.15
=
0.395.$ 
We also observe that,  
$v_B(\pi_A^{\tr},\pi_B=(b_H,b_L);f^{0.5})\le 0.395,  \ \ 
\forall b_H,b_L\in[0,1].
$ 
\end{example}

\begin{example}
\label{ex:straightforward-fie}
Consider the single-buyer, two-seller case with prior
\(\mu_0=[0.49,0.15,0.15,0.21]\), where the coordinates correspond to
\((HH,HL,LH,LL)\), and let \(v_A^+=v_B^+=u^+=1\). Let the buyer use the straightforward policy \(f^{0.5}\). It is easy to see that $u(\pi^\tr;f^{0.5}) = (1 - 0.21) = 0.79$.  
\end{example}
\fi 

\begin{example}
\label{ex:straightforward-fie}
Consider the single-buyer, two-seller case with prior
\(\mu_0=[0.49,0.15,0.15,0.21]\), where the coordinates correspond to
\((HH,HL,LH,LL)\), and let \(v_A^+=v_B^+=u^+=1\). Let the buyer use the straightforward policy \(f^{0.5}\). 
\end{example}

We show that truthful signaling is a best response for  \(B\) when  \(A\) is truthful. Let \(B\) use any signaling strategy with probabilities \(b_H=\Pr(h\mid H)\) and \(b_L=\Pr(h\mid L)\). A direct calculation shows that the expected payoff of \(B\) under such a deviation is at most $ 0.15 + 0.49 \cdot 0.5 = 0.395,$ which is exactly the payoff under truthful signaling. Hence sender \(B\) has no profitable deviation. By symmetry, the same holds for sender \(A\), so truthful signaling is a fully informative equilibrium. 
%Under this equilibrium, the receiver learns the true state and obtains payoff \(u^+(1-\mu_0(LL))=0.79\). 
%Thus, truthful signaling is a best response for sender $B$ if sender $A$ is truthful. 
 In Theorem~\ref{thm:straightforward-fie-characterization}, we give necessary and sufficient conditions on priors
under which a straightforward policy induces an FIE.
\begin{restatable}{theorem}{Characterization}[Characterization of FIE under a straightforward policy]
\label{thm:straightforward-fie-characterization}
Let $x=\mu_0(HH), y=\mu_0(HL), 
z=\mu_0(LH)$ and $w=\mu_0(LL)$, where the first coordinate denotes sender \(A\)'s state and the second coordinate denotes sender
\(B\)'s state. Under the straightforward policy \(f^p\), truthful signaling is a fully informative
equilibrium if and only if
\[
p x\ge y
\qquad\text{and}\qquad
(1-p)x\ge z.
\]
\end{restatable}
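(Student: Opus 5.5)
The plan is to fix one sender truthful and compute the other sender's best-response value exactly. Truthful signaling is then an FIE if and only if no deviation beats the truthful payoff, and the argument is symmetric in the two senders. So I will treat sender $B$ against a truthful $A$ in detail and obtain the condition $(1-p)x\ge z$. The mirror computation for $A$ against a truthful $B$ gives $px\ge y$.

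\textbf{Setup.} Parametrize a deviation of $B$ by $b_H=\pi_B(h\mid H)$ and $b_L=\pi_B(h\mid L)$. Because $A$ is truthful, the receiver learns $\omega_A$ exactly, so I split by $A$'s state and apply Lemma~\ref{lem:straightforward-policy} to each resulting posterior.

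\textbf{Case $\omega_A=H$.} Here $\mu(H\cdot)=1$, so $a_B$ can be chosen only through a tie, which requires $\mu(\cdot H)=1$, and then it is chosen with probability $1-p$. After $s_B=h$ the posterior on $B$ being high is $xb_H/(xb_H+yb_L)$, which equals $1$ exactly when $b_L=0$. Symmetrically, after $s_B=\ell$ it equals $1$ exactly when $b_L=1$. Hence this part of $B$'s payoff is
\[
(1-p)x\bigl(b_H\mathbb 1\{b_L=0\}+(1-b_H)\mathbb 1\{b_L=1\}\bigr).
\]

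\textbf{Case $\omega_A=L$.} Here $\mu(H\cdot)=0$, so $a_B$ is chosen after a signal exactly when $B$'s posterior marginal is at least $\tfrac12$. After $h$ this is the condition $zb_H\ge wb_L$, which yields mass $zb_H+wb_L$. After $\ell$ it is the condition $z(1-b_H)\ge w(1-b_L)$, which yields the complementary masses.

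\textbf{Truthful payoff and the deviations.} Under truthful play $B$ receives $(1-p)x+z$ (taking $v_B^+=1$ without loss of generality). I then maximize the deviation payoff over three regimes.
\begin{itemize}
\item If $b_L=0$, the best choice is $b_H=1$, which is truthful.
\item If $b_L=1$, the $H$-world term is $(1-p)x(1-b_H)$, and the $L$-world gain is at most $z+w$ and requires pooling.
\item If $0<b_L<1$, the $H$-world term is zero. The $L$-world term is maximized by $b_H=1$ and $b_L=\min\{1,z/w\}$, which gives $2z$ when $z\le w$.
\end{itemize}
Sufficiency then amounts to checking that every regime yields at most $(1-p)x+z$ when $(1-p)x\ge z$. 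The interior regime gives $2z\le (1-p)x+z$. The pooling regime gives at most $z+w$, and when $z>w$ this is below $2z\le (1-p)x+z$. For necessity, if $(1-p)x<z$ and $z\le w$, the deviation $b_H=1$, $b_L=z/w$ gives $2z>(1-p)x+z$, which is a profitable deviation.

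\textbf{Main obstacle.} The hard step is the bookkeeping caused by the discontinuity of $f^p$ at ties: a vanishing $b_L>0$ destroys $B$'s tie share in the $\omega_A=H$ world. The careful sup over the regimes is therefore where the work lies.

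\textbf{Possible gap when $z>w$.} In that regime the best deviation I find is full pooling, worth $z+w$, rather than $2z$. Necessity would then seem to require only $(1-p)x\ge w$, not $(1-p)x\ge z$. I would first look for an additional deviation attaining $2z$ in this regime. If none exists, the statement needs $\min\{z,w\}$ in place of $z$, and symmetrically $\min\{y,w\}$ in place of $y$. I would check this carefully against the paper's intended conventions before finalizing necessity.
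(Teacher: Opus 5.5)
Your suspected gap is not in the statement but in your regime analysis. Because of it, your argument never proves necessity for $z>w$, and the replacement of $z$ by $\min\{z,w\}$ you propose is false. In the regime $b_L=0$ you assert that the best choice is $b_H=1$, but this drops the $(\ell,\ell)$ term you derived yourself. With $b_L=0$ the receiver picks $a_B$ after $(\ell,\ell)$ iff $z(1-b_H)\ge w$, i.e.\ $b_H\le 1-w/z$, which is compatible with $b_H>0$ exactly when $z>w$. Write $C=(1-p)x$. Then sender $B$'s payoff in this regime is $Cb_H+z+w$ for $b_H\le 1-w/z$ and $(C+z)b_H$ otherwise. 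So the deviation $b_H=1-w/z$, $b_L=0$ yields
\[
C\Bigl(1-\frac{w}{z}\Bigr)+z+w=(C+z)+w\Bigl(1-\frac{C}{z}\Bigr),
\]
which strictly improves on truthful play whenever $C<z$, since $w>0$. This is precisely the paper's deviation for $z>w$. Signal $h$ stays fully revealing, so $B$ keeps his $(1-p)$ tie share at $\delta_{HH}$ on it; keeping $b_L$ exactly zero is how one avoids the discontinuity you correctly identified. Signal $\ell$ pools the residual $LH$ mass $w$ with all of the $LL$ mass at posterior exactly $\tfrac12$, where the straightforward policy selects $a_B$. Concretely, $x=z=0.4$, $y=w=0.1$, $p=\tfrac12$ gives $C=0.2\ge w$, yet this deviation earns $0.65>0.6$. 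So truthful signaling is not an equilibrium there, and the theorem's condition $(1-p)x\ge z$ is the right one.

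The same omission affects your $b_L=1$ regime, which is the $b_L=0$ regime with the signal labels swapped. You evaluated the $H$-world term $C(1-b_H)$ only at full pooling $b_H=1$. For $z\ge w$ the supremum is $\max\{C+z,\;C(1-w/z)+z+w\}$, attained at $b_H=0$ or $b_H=w/z$, and it strictly exceeds your bound $z+w$ whenever $z>w$ and $C>0$. Hence your sufficiency check also rests on incorrect regime values. It is rescued only by the correct suprema, using $C(1-w/z)+z+w\le C+z$ iff $C\ge z$, and the interior value $z+\min\{z,w\}\le 2z\le C+z$.

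The paper avoids this case bookkeeping for sufficiency by treating an arbitrary finite signal space. It sets $E_0=\{s:P_L(s)=0\}$ (the only signals on which $B$ can win when $\omega_A=H$) and $E_+=\{s:zP_H(s)\ge wP_L(s)\}$ (where $a_B$ is chosen when $\omega_A=L$). With $F=E_+\setminus E_0$ and $G=E_+^c$, it shows $U_B-U_B^{\mathrm{tr}}\le -(C-z)P_H(F)-(C+z)P_H(G)\le 0$ whenever $C\ge z$.
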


\begin{proof}
We prove the sufficiency condition for sender \(B\); the argument for sender \(A\) is symmetric. Suppose \(A\) is truthful and \(B\) uses an arbitrary signaling policy over a finite signal space \(S\). Write \(P_H(s)=\Pr(s\mid \omega_B=H)\) and \(P_L(s)=\Pr(s\mid \omega_B=L)\). Since \(A\) is truthful, the receiver knows whether \(\omega_A=H\) or \(\omega_A=L\). Let \(E_0:=\{s:P_L(s)=0\}\) and \(E_+:=\{s:zP_H(s)\ge wP_L(s)\}\). When \(\omega_A=H\), sender \(B\) can be selected only on \(E_0\), where the posterior is concentrated on \(HH\), and then only with probability \(1-p\). When \(\omega_A=L\), the receiver chooses \(a_B\) exactly on \(E_+\). Hence, for \(C:=(1-p)x\),
\[
U_B=C P_H(E_0)+zP_H(E_+)+wP_L(E_+),\qquad U_B^{\tr}=C+z.
\]
Since \(E_0\subseteq E_+\), let \(F:=E_+\setminus E_0\) and \(G:=E_+^c\). Using \(P_L(E_0)=0\), we get
\[
U_B-U_B^{\tr}=-C P_H(F)+wP_L(F)-(C+z)P_H(G).
\]
For every \(s\in F\subseteq E_+\), \(wP_L(s)\le zP_H(s)\), so \(wP_L(F)\le zP_H(F)\). Therefore,
\[
U_B-U_B^{\tr}\le -(C-z)P_H(F)-(C+z)P_H(G)\le 0.
\]
whenever \(C=(1-p)x\ge z\). Thus \(B\) has no profitable deviation. Symmetrically, \(A\) has no profitable deviation whenever \(px\ge y\). This proves sufficiency.
\iffalse 
For necessity, suppose \((1-p)x<z\). If \(z\le w\), let \(B\) deviate by \(\Pr(h\mid H)=1\) and \(\Pr(h\mid L)=z/w\). After \((\ell,h)\), the receiver is indifferent between \(a_B\) and \(a_0\), and tie-breaking selects \(a_B\). The deviation payoff is \(U_B=z+w(z/w)=2z>(1-p)x+z=U_B^{\tr}\). \sg{there is no condition on z,w in the theorem} \gan{we consider both $w > z$ and $w \leq z$ cases} If \(z>w\), let \(B\) deviate by \(\Pr(h\mid H)=1-w/z\) and \(\Pr(h\mid L)=0\). Then signal \(h\) is fully revealing of \(B=H\), while after \((\ell,\ell)\) the posterior masses on \(LH\) and \(LL\) are equal \sg{but we do need to argue that this quantity is at least half. else according to lemma 1, a0 wud be selected} , so tie-breaking selects \(a_B\). The deviation payoff is \(U_B=(1-p)x(1-w/z)+z+w\), and hence \(U_B-U_B^{\tr}=w(1-(1-p)x/z)>0\). Thus truthful signaling can be an equilibrium only if \((1-p)x\ge z\). By symmetry, it also requires \(px\ge y\). \fi 

For necessity, suppose \(C:=(1-p)x<z\). We show that sender \(B\) has a profitable deviation. Assume \(w>0\), as holds under a full-support prior.

First suppose \(z\le w\). Let \(B\) use two signals \(h,\ell\) with \(P_H(h)=1\) and \(P_L(h)=z/w\). After \((\ell,h)\), the unnormalized posterior masses on \(LH\) and \(LL\) are \(zP_H(h)=z\) and \(wP_L(h)=z\), respectively. Hence the receiver is indifferent between \(a_B\) and \(a_0\), and by tie-breaking selects \(a_B\). Since \(P_L(h)>0\), sender \(B\) receives no payoff from the \(HH\) state under this signal. Thus the deviation payoff is \(U_B=z+w(z/w)=2z>C+z=U_B^{\tr}\), where the strict inequality follows from \(C<z\).

Now suppose \(z>w\). Let \(B\) use two signals \(h,\ell\) with \(P_H(h)=1-w/z\), and  \(P_L(h)=0\). Signal \(h\) is fully revealing of \(B=H\), so when \(A=H\), sender \(B\) is selected with probability \(1-p\), contributing \(C(1-w/z)\). When \(A=L\), the receiver selects \(a_B\) after \(h\). After \((\ell,\ell)\), the unnormalized posterior masses on \(LH\) and \(LL\) are \(zP_H(\ell)=w\) and \(wP_L(\ell)=w\), so the receiver is indifferent between \(a_B\) and \(a_0\), and tie-breaking again selects \(a_B\). Therefore the deviation payoff is \(U_B=C(1-w/z)+z+w\), and hence \(U_B-U_B^{\tr}=C(1-w/z)+z+w-(C+z)=w(1-C/z)>0\), because \(w>0\) and \(C<z\). Thus truthful signaling cannot be an equilibrium unless \((1-p)x\ge z\).

By the symmetric argument for sender \(A\), truthful signaling also requires \(px\ge y\). Therefore both conditions are necessary.   
\end{proof}
Next, we give another example, demonstrating that there are prior distributions for which the
straightforward action policy does not induce truthful equilibria. This motivates the richer class of
mechanisms proposed in the paper, which induce truthful equilibria for any prior distribution
$\mu_0\in\operatorname{int}(\Delta(\Omega))$.

\begin{example}
\label{ex:straightforward-not-fie-any-p}
Consider the uniform prior over $\Omega=\{HH,HL,LH,LL\}$ and the straightforward policy $f^p$, $p\in[0,1]$. 
\end{example}
\iffalse 
\begin{example}
\label{ex:straightforward-not-fie}
Consider the uniform prior over $\Omega$ in Example~\ref{ex:straightforward-fie} 
and let the receiver commit to $f^{0.5}$. The utility of each seller under truthful signaling is
\[
v_A(\pi_A^{\tr},\pi_B^{\tr};f^{0.5})
=
v_B(\pi_A^{\tr},\pi_B^{\tr};f^{0.5})
=
\frac12\cdot \mu_0(HH)+\mu_0(LH)
=
0.125+0.25
=
0.375.
\]

Now, suppose seller $B$ deviates from truthful signaling and uses 
$\tilde{\pi}_B(h\mid H)=0.95,\tilde{\pi}_B(h\mid L)=0.7.$
 The corresponding
unnormalized posterior masses are
$\mu_0(LH)\tilde{\pi}_B(h\mid H)=0.25\cdot 0.95=0.2375,
 \mbox{ and }    
\mu_0(LL)\tilde{\pi}_B(h\mid L)=0.25\cdot 0.7=0.175.$
 
Since $0.2375>0.175$, the receiver chooses $a_B$ after observing $\ell h$. It leads to a payoff of $0.2375+0.175=0.4125$ to sender $B$ for signal  $\ell h$.
For the signal $\ell\ell$, the condition for choosing $a_B$ fails because 
$0.25(1-0.95)=0.0125
<
0.25(1-0.7)=0.075.$
 
Similarly, the signals $hh$ and $h\ell$ do not yield any payoff to sender $B$ and his net utility,

$v_B(\pi_A^{\tr},\tilde{\pi}_B;f^{0.5})
=
0.4125
>
0.375
=
v_B(\pi_A^{\tr},\pi_B^{\tr};f^{0.5}).$
 
Thus, truthful signaling is not a best response for sender $B$, implying
$f^{0.5}$ does not induce an FIE under the uniform prior.
\end{example}
\fi 
Example~\ref{ex:straightforward-not-fie-any-p} shows that the receiver's expected utility may not be  
maximized under a straightforward action policy in the Comp-BP setup. To see this, let  \(B\) be truthful and  \(A\) deviate to $\widetilde\pi_A(h\mid H)=\widetilde\pi_A(h\mid L)=1$. Then after $(h,\ell)$, the posterior is uniform over $\{HL,LL\}$, so the receiver is indifferent between $a_A$ and $a_0$ and selects $a_A$. Hence for all $p<1$, 
$$v_A(\widetilde\pi_A,\pi_B^{\tr};f^p)=\mu_0(HL)+\mu_0(LL)=\tfrac12> v_A(\pi_A^{\tr},\pi_B^{\tr};f^p)=p\mu_0(HH)+\mu_0(HL)=\tfrac{p}{4}+\tfrac14 .$$

For $p=1$, a symmetric deviation by \(B\) yields payoff $\mu_0(LH)+\mu_0(LL)=\tfrac12>(1-p)\mu_0(HH)+\mu_0(LH)=\tfrac14$. Thus, for every $p\in[0,1]$, $f^p$ does not induce a fully informative equilibrium under the uniform prior. 
 This necessitates the main
contribution of the paper: a richer class of receiver-commitment policies that can induce FIE for any
prior.  

%% file: singleUnit.tex
\section{Inducing FIE under Receiver-Commitment Policies}
\label{sec:truthful}
%Section~\ref{sec:straightforward-warmup} shows that straightforward policies can induce fully informative equilibria only for a restricted class of priors. 
We now show that if the receiver can commit to a richer, possibly randomized, action policy, truthful revelation can be induced for every prior
$\mu_0\in \operatorname{int}(\Delta(\Omega))$. 
The main idea is to design the receiver's action probabilities so that each sender's expected payoff
from any signaling strategy is upper-bounded by the payoff he obtains under truthful revelation.
This is achieved by a class of action policies that we call \emph{General Prior Admissible} (GPA) policies.  
%\sg{do we saying snthng like this: Straightforward policy is ex-post (after obtaining signals) expected utility maximizing, leading to failure in inducing FIE. In GPA, we focus on ex-ante utility maximization for the receiver by conditioning on the joint probability distributions.  } \gan{I think we can avoid too much of Mechanism design jargon unless it is absolutely essential for some result}
%\sg{though not essential...i felt a reader might confuse about diff between straight forward policy n GPA...both appear utility maximizing...but if you believe lets leave it this way, ..let it be} \gan{I think we can say that the Straightforward policy induces fully informative equilibria only under special cases (Theorem 1).  Furthermore, the receiver's expected utility may not be maximized under straightforward policy.  The proposed class of GPA policies generalize this for any prior and also ensures receiver utility maximization.   }
Throughout this section, for a signaling profile $\pi$, let 
$P_\pi(s)
:=
\sum_{\omega\in\Omega}\mu_0(\omega)\pi(s\mid \omega)
$ 
denote the probability of signal profile $s=(s_A,s_B)$ under $\pi$. Whenever $P_\pi(s)>0$, let
$\mu_\pi(\cdot\mid s)$ denote the posterior induced by $s$. Signal profiles with $P_\pi(s)=0$ do
not affect expected utilities, so the value of the posterior on such signals is immaterial. More importantly, GPA policies are prior-independent: every GPA policy guarantees truthful revelation for every full-support prior, and those satisfying $\sigma_A+\sigma_B=1$ also attain the receiver's global payoff upper bound.

For a posterior $\mu\in\Delta(\Omega)$, write $f_T(\mu):=f(a_T\mid \mu)$ for the probability that the receiver takes sender $T$'s favorable action under posterior $\mu$ and let $\delta_{HH}:=[1,0,0,0]$ 
 denote the posterior concentrated on state $HH$, where coordinates are ordered as
\((HH,HL,LH,LL)\). For each sender \(T\in\{A,B\}\), define 
$\sigma_T:=f_T(\delta_{HH})$ $,
\Sigma_T:=
\begin{bmatrix}
\sigma_T\\
1
\end{bmatrix},$ and $
\Phi_T^\mu:=
\begin{bmatrix}
\mu(\omega_T=H,\omega_{-T}=H)\\
\mu(\omega_T=H,\omega_{-T}=L)
\end{bmatrix}$. 
Thus, $\langle \Sigma_T,\Phi_T^\mu\rangle
=
\sigma_T\mu(\omega_T=H,\omega_{-T}=H)
+
\mu(\omega_T=H,\omega_{-T}=L).$

\begin{definition}[General Prior Admissible Policy]
\label{def:gpa}
We call an action policy \(f:\Delta(\Omega)\to\Delta(\mathcal A)\) 
\emph{General Prior Admissible} (GPA) if, for every posterior
\(\mu\in\Delta(\Omega)\) and every sender \(T\in\{A,B\}\), we have 
\begin{enumerate}
    \item \(f_T(\mu)=0\) whenever \(\mu^T(H)=0\);
    \item \(f_T(\mu)=1\) whenever \(\mu^T(H)=1\) and \(\mu^{-T}(H)=0\);
    \item \(f_T(\mu)\le \langle \Sigma_T,\Phi_T^\mu\rangle\). Equivalently, this condition can be written as
$f_A(\mu) \le \sigma_A \mu(HH) + \mu(HL)$,
and similarly for sender \(B\).
\end{enumerate}
Furthermore, since \(f\) is an action policy, we always have
$f_A(\mu)+f_B(\mu)\le 1 
\text{ for every } \mu\in\Delta(\Omega).$ 
In particular,
$\sigma_A+\sigma_B
=
f_A(\delta_{HH})+f_B(\delta_{HH})
\le 1.
$ 
\end{definition}

\begin{restatable}{theorem}{receiverPolicy}
\label{thm:receiver_policy}
Every GPA policy induces a fully informative equilibrium for any 
\(\mu_0\in \operatorname{int}(\Delta(\Omega))\).
\end{restatable}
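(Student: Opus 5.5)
Fix a GPA policy $f$ and assume sender $B$ is truthful; by symmetry it suffices to show that truthful signaling is a best response for $A$. I would first compute $A$'s payoff under truthful play. Under $\pi^{\tr}$ the posterior after each signal profile is a point mass. Conditions~1 and~2 of Definition~\ref{def:gpa} then give: $f_A(\delta_{HL})=1$, $f_A(\delta_{LH})=f_A(\delta_{LL})=0$, and $f_A(\delta_{HH})=\sigma_A$ by definition. Hence
\[
\widehat v_A(\pi^{\tr};f)=v_A^+\bigl(\sigma_A\mu_0(HH)+\mu_0(HL)\bigr).
\]

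Next I would bound an arbitrary deviation $\pi_A$ from above, allowing any finite signal set. The key identity is that for any signal profile $s$ with $P_\pi(s)>0$,
\[
P_\pi(s)\,\mu_\pi(\omega\mid s)=\mu_0(\omega)\,\pi(s\mid\omega).
\]
By condition~3,
\[
\widehat v_A(\pi_A,\pi_B^{\tr};f)=v_A^+\sum_s P_\pi(s) f_A(\mu_\pi(\cdot\mid s))\le v_A^+\sum_s P_\pi(s)\bigl(\sigma_A\mu_\pi(HH\mid s)+\mu_\pi(HL\mid s)\bigr).
\]
By the identity, the right-hand side equals
\[
v_A^+\sum_s\bigl(\sigma_A\mu_0(HH)\pi(s\mid HH)+\mu_0(HL)\pi(s\mid HL)\bigr).
\]
Summing over $s$ uses $\sum_s\pi(s\mid\omega)=1$, so the whole expression collapses to $v_A^+(\sigma_A\mu_0(HH)+\mu_0(HL))$, which is exactly the truthful payoff. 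So no deviation is profitable. The same argument applied to $B$, with $\Phi_B$, gives the symmetric bound $\sigma_B\mu_0(HH)+\mu_0(LH)$, and together these show that $\pi^{\tr}$ is an $f$-induced equilibrium, that is, an FIE.

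The step needing care is verifying that the truthful payoff actually attains the bound: a bound that is not attained would not show truthfulness is a best response. This is where conditions~1 and~2 enter, pinning $f_A$ to $1$ on $\delta_{HL}$ and to $0$ on $\delta_{LH}$ and $\delta_{LL}$. Since condition~3 only gives the upper bound $f_A(\delta_{HL})\le 1$, condition~2 is what ensures equality there.

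I would also remark on two points. First, the argument never uses the structure of $B$'s truthful strategy beyond the fact that $\pi_B$ is a fixed policy. Second, the argument uses $\mu_0\in\operatorname{int}(\Delta(\Omega))$ only to guarantee that every truthful signal profile occurs with positive probability, so that the posteriors $\delta_\omega$ are well defined. Signal profiles with $P_\pi(s)=0$ drop out of every sum, so they cause no problem.
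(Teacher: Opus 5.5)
Your proof is correct and is essentially the paper's argument: condition~3 together with $P_\pi(s)\mu_\pi(\omega\mid s)=\mu_0(\omega)\pi(s\mid\omega)$ caps sender $T$'s payoff at $v_T^+\langle\Sigma_T,\Phi_T^{\mu_0}\rangle$ for every profile $\pi$, and conditions~1--2 show that truthful play attains this cap. Your first closing remark is wrong, though. Only the cap is independent of $\pi_B$; attaining it needs $B$ truthful, so that $A$'s truthful reports induce point-mass posteriors, and truthfulness is not a best response to an arbitrary $\pi_B$. For example, take $f_B(\mu)=\sigma_B\mu(HH)+\mu(LH)$ and $f_A(\mu)=\sigma_A\mu(HH)+\mu(HL)$, except $f_A(\mu)=0$ whenever $\mu^A(H)=1$ and $0<\mu^B(H)<1$; this is a GPA policy, and against an uninformative $B$ a truthful $A$ earns $0$ while an uninformative $A$ earns $v_A^+\bigl(\sigma_A\mu_0(HH)+\mu_0(HL)\bigr)>0$.
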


\begin{proof}
%Fix a GPA policy \(f\). We prove that the truthful signaling profile $\pi^{\mathrm{tr}}$ is an equilibrium of the sender game induced by \(f\).
For any sender \(T\), signaling profile \(\pi\), and action policy \(f\), the sender's expected utility is given by 
$v_T(\pi;f)
=
\mathbb E_{\omega\sim\mu_0}
\mathbb E_{s\sim\pi(\omega)}
\mathbb E_{a\sim f(\mu_\pi(\cdot\mid s))}
v_T(a).
$ 
Since sender \(T\) receives payoff \(v_T^+\) exactly when the receiver chooses action \(a_T\), this can
be written as
\begin{equation}
v_T(\pi;f)
=
v_T^+\sum_{s\in S}P_\pi(s) f_T(\mu_\pi(\cdot\mid s)).
\label{eq:sender-utility-signal-sum}    
\end{equation}
We will use the following Bayesian-plausibility identity i.e.,  for every state \(\omega\in\Omega\),
\begin{equation}    
\sum_{s\in S}P_\pi(s)\mu_\pi(\omega\mid s)
=
\mu_0(\omega).
\label{eq:bayes-plausibility}
\end{equation}

Indeed, whenever \(P_\pi(s)>0\), 
$P_\pi(s)\mu_\pi(\omega\mid s)
=
\pi(s\mid \omega)\mu_0(\omega).
$ 
Summing over \(s\) gives 
$\sum_{s\in S}P_\pi(s)\mu_\pi(\omega\mid s)
=
\mu_0(\omega)\sum_{s\in S}\pi(s\mid \omega)
=
\mu_0(\omega).
$ Now fix a sender \(T\). By the GPA upper-bound condition, 
$f_T(\mu_\pi(\cdot\mid s))
\le
\langle \Sigma_T,\Phi_T^{\mu_\pi(\cdot\mid s)}\rangle 
\text{ for every }s.$ 
Therefore, using \eqref{eq:sender-utility-signal-sum},
 $v_T(\pi;f)
\le
v_T^+
\sum_{s\in S}P_\pi(s)
\langle \Sigma_T,\Phi_T^{\mu_\pi(\cdot\mid s)}\rangle.$ 
Expanding the inner product and using \eqref{eq:bayes-plausibility}, we obtain $
v_T(\pi;f)
\le
v_T^+
\left(
\sigma_T\mu_0(\omega_T=H,\omega_{-T}=H)
+
\mu_0(\omega_T=H,\omega_{-T}=L)
\right). $
Equivalently,
\begin{equation}
v_T(\pi;f)
\le
v_T^+\langle \Sigma_T,\Phi_T^{\mu_0}\rangle.
 \label{eq:sender-upper-bound}   
\end{equation}

We now compute sender \(T\)'s payoff under truthful signaling. Under
\(\pi^{\mathrm{tr}}\), the receiver learns the realized state. If the realized state is
\((\omega_T=H,\omega_{-T}=H)\), then the posterior is \(\delta_{HH}\), and sender \(T\)'s favorable
action is selected with probability \(\sigma_T\). If the realized state is
\((\omega_T=H,\omega_{-T}=L)\), GPA condition (ii) implies that \(a_T\) is selected with probability
one. If \(\omega_T=L\), GPA condition (i) implies that \(a_T\) is selected with probability zero.
Hence $
v_T(\pi^{\mathrm{tr}};f)
=
v_T^+
\left(
\sigma_T\mu_0(\omega_T=H,\omega_{-T}=H)
+
\mu_0(\omega_T=H,\omega_{-T}=L)
\right),$
 or equivalently,
\begin{equation}
 v_T(\pi^{\mathrm{tr}};f)
=
v_T^+\langle \Sigma_T,\Phi_T^{\mu_0}\rangle.
\label{eq:truthful-sender-payoff}   
\end{equation}
 Combining \eqref{eq:sender-upper-bound} and \eqref{eq:truthful-sender-payoff} gives
$v_T(\pi;f)\le v_T(\pi^{\mathrm{tr}};f)$ for all $\pi$. In particular, for any deviation $\pi_T$,
$v_T(\pi_T,\pi_{-T}^{\mathrm{tr}};f)\le v_T(\pi_T^{\mathrm{tr}},\pi_{-T}^{\mathrm{tr}};f)$. This completes the proof.
% Thus truthful signaling is a best response for each sender when the other sender is truthful.
%Therefore \(\pi^{\mathrm{tr}}\) is an equilibrium of the sender game induced by \(f\). Since
%\(\pi^{\mathrm{tr}}\) fully reveals both state components, the equilibrium is fully informative.
\end{proof}

The proof actually establishes a stronger sender-side property: under a GPA policy, truthful signaling
maximizes each sender's expected payoff over all signaling profiles. This does not imply that every
equilibrium gives the receiver the same payoff; rather, it guarantees that a fully informative
equilibrium exists for every prior. 
%The class of GPA policies is nonempty. A particularly simple subclass is given by linear GPA policies.
%Choose any \(\sigma_A,\sigma_B\ge 0\) with \(\sigma_A+\sigma_B\le 1\), and define $f_A(\mu)
%=
%\sigma_A\mu(HH)+\mu(HL),
%\qquad
%f_B(\mu)
%=
%\sigma_B\mu(HH)+\mu(LH),$
%and $
%f_0(\mu)
%=
%1-f_A(\mu)-f_B(\mu).$ 
%Then \(f(\mu)=(f_0(\mu),f_A(\mu),f_B(\mu))\) is a valid action policy. Indeed,
%\[
%f_A(\mu)+f_B(\mu)
%=
%(\sigma_A+\sigma_B)\mu(HH)+\mu(HL)+\mu(LH)
%\le
%\mu(HH)+\mu(HL)+\mu(LH)
%\le 1.
%\]
%Moreover, the three GPA conditions hold with equality in condition (iii). Hence every such linear
%policy induces a fully informative equilibrium by Theorem~\ref{thm:receiver_policy}.
We next identify when a GPA policy also gives the receiver the largest possible payoff at the truthful
equilibrium.

\begin{restatable}{theorem}{thmthree}
\label{thm:receiver-optimal-gpa}
%Under a GPA policy $f$ satisfying $\sigma_A+\sigma_B=1$, truthful signaling induces a fully informative equilibrium that attains the receiver’s payoff upper bound.
Let $f$ be a GPA policy with constants $\sigma_A,\sigma_B$. At the truthful equilibrium,
\[
u(\pi^{\tr};f)=u^+\bigl(\mu_0(HL)+\mu_0(LH)+(\sigma_A+\sigma_B)\mu_0(HH)\bigr).
\]
Consequently, the truthful equilibrium attains the receiver's  payoff upper bound iff  $\sigma_A+\sigma_B=1$.
\end{restatable}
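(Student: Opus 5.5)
Under $\pi^{\tr}$ the receiver learns the realized state exactly, so the plan is to compute $u(\pi^{\tr};f)$ state by state. I will use the GPA conditions to pin down the action distribution at each point-mass posterior, and then compare the total with Proposition~\ref{prop:receiver-upper-bound}.

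\textbf{State-by-state payoffs.} For $\omega=HL$ the posterior is $\delta_{HL}$. Here $\mu^A(H)=1$ and $\mu^B(H)=0$, so GPA condition (ii) gives $f_A(\delta_{HL})=1$ and the receiver gets $u^+$. Symmetrically, in state $LH$ condition (ii) for $B$ forces $a_B$, again giving $u^+$. For $\omega=LL$, condition (i) applied to both senders gives $f_A(\delta_{LL})=f_B(\delta_{LL})=0$, so $a_0$ is taken with certainty and the payoff is $0$. For $\omega=HH$ the posterior is $\delta_{HH}$, where by definition $a_A$ is chosen with probability $\sigma_A$ and $a_B$ with probability $\sigma_B$; each of these yields $u^+$ in this state. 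The remaining mass $1-\sigma_A-\sigma_B\ge 0$ (nonnegative by the remark after Definition~\ref{def:gpa}) goes to $a_0$ and yields $0$. Weighting by $\mu_0$ and summing gives the displayed formula
\[
u(\pi^{\tr};f)=u^+\bigl(\mu_0(HL)+\mu_0(LH)+(\sigma_A+\sigma_B)\mu_0(HH)\bigr).
\]
Theorem~\ref{thm:receiver_policy} guarantees that $\pi^{\tr}$ is indeed an equilibrium, so this is a payoff attained at an equilibrium.

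\textbf{The ``iff''.} The upper bound of Proposition~\ref{prop:receiver-upper-bound} is
\[
u^+\bigl(1-\mu_0(LL)\bigr)=u^+\bigl(\mu_0(HH)+\mu_0(HL)+\mu_0(LH)\bigr).
\]
Subtracting the formula above, the gap is $u^+(1-\sigma_A-\sigma_B)\mu_0(HH)$. Since $\mu_0$ has full support we have $\mu_0(HH)>0$, and $u^+>0$, so the gap vanishes iff $\sigma_A+\sigma_B=1$.

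\textbf{Main obstacle.} There is no real obstacle; the only care needed is checking that conditions (i) and (ii) fully determine the action distribution at $\delta_{HL}$, $\delta_{LH}$ and $\delta_{LL}$. At $\delta_{HL}$, $f_A=1$ forces $f_B=0$, so the distribution is fully fixed. At $\delta_{LL}$, conditions (i) for both senders leave only $a_0$. The interiority of $\mu_0$ is used solely to obtain the strict inequality $\mu_0(HH)>0$ in the ``only if'' direction.
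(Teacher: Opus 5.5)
Your proposal is correct and follows the paper's proof essentially line by line. You compute the truthful payoff state by state using GPA conditions (i), (ii) and the definition of $\sigma_T$ at $\delta_{HH}$, then subtract it from the bound of Proposition~\ref{prop:receiver-upper-bound} and use $\mu_0(HH)>0$ to obtain the equivalence with $\sigma_A+\sigma_B=1$.
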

\begin{proof}
Under truthful signaling, the receiver learns the realized state. If the state is \(HL\), GPA condition
(ii) implies that the receiver chooses \(a_A\) with probability one, yielding receiver payoff \(u^+\).
Similarly, if the state is \(LH\), the receiver chooses \(a_B\) with probability one, again yielding
payoff \(u^+\). If the state is \(LL\), GPA condition (i) implies
 $f_A(\delta_{LL})=f_B(\delta_{LL})=0,$ 
so the receiver chooses \(a_0\) and obtains payoff zero.

If the state is \(HH\), the posterior is \(\delta_{HH}\). The receiver chooses \(a_A\) with probability
\(\sigma_A\), chooses \(a_B\) with probability \(\sigma_B\), and chooses \(a_0\) with probability
\(1-\sigma_A-\sigma_B\). Both \(a_A\) and \(a_B\) yield payoff \(u^+\) in state \(HH\), while \(a_0\)
yields payoff zero. Hence the expected receiver payoff in state \(HH\) is
\[
u^+(\sigma_A+\sigma_B).
\]
Therefore,
\[
u(\pi^{\mathrm{tr}};f)
=
u^+
\left(
\mu_0(HL)+\mu_0(LH)+(\sigma_A+\sigma_B)\mu_0(HH)
\right).
\]

By Proposition~\ref{prop:receiver-upper-bound}, no policy and signaling profile can give the receiver
more than
\[
u^+\bigl(1-\mu_0(LL)\bigr)
=
u^+\bigl(\mu_0(HH)+\mu_0(HL)+\mu_0(LH)\bigr).
\]
Subtracting the payoff under truthful signaling gives
\[
u^+\bigl(1-\mu_0(LL)\bigr)-u(\pi^{\mathrm{tr}};f)
=
u^+(1-\sigma_A-\sigma_B)\mu_0(HH).
\]
Since \(f\) is a valid action policy, \(\sigma_A+\sigma_B\le 1\). Since
\(\mu_0\in\operatorname{int}(\Delta(\Omega))\), we have \(\mu_0(HH)>0\). Therefore the loss is zero
if and only if \(\sigma_A+\sigma_B=1\).
\end{proof}

  Theorems~\ref{thm:receiver_policy} and \ref{thm:receiver-optimal-gpa} together show that
the receiver can simultaneously induce truthful revelation and attain her maximum possible payoff.
For example, the linear GPA policy 
$f_A(\mu)=p\,\mu(HH)+\mu(HL), 
f_B(\mu)=(1-p)\mu(HH)+\mu(LH)$ for some $p\in[0,1] $ 
with \(f_0(\mu)=1-f_A(\mu)-f_B(\mu)\), induces a fully informative equilibrium for every prior and
attains the receiver’s payoff upper bound at that equilibrium. This contrasts sharply with straightforward policies.   As shown in
 Example~\ref{ex:straightforward-not-fie-any-p}, a straightforward policy  \(f^{p}\) may fail to induce truthful signaling even under the uniform prior.  GPA policies avoid this failure by capping
each sender's action probability at a posterior-linear expression whose expectation is pinned down
by Bayesian plausibility.

%% file: multiple.tex
%\newpage
\section{\texttt{Comp-BP} Beyond Two Senders}
\label{sec:mult}
We now extend the construction of Section~\ref{sec:truthful} to an arbitrary number of senders. Let there be $m\ge 2$ senders indexed by $i\in[m]$. The state space is $\Omega=\{H,L\}^m$, where $\omega_i\in\{H,L\}$ denotes sender $i$'s state. Each sender has binary signal space $S_i=\{h,\ell\}$, and truthful signaling means $\pi_i^{\mathrm{tr}}(h\mid H)=1$ and $\pi_i^{\mathrm{tr}}(\ell\mid L)=1$ for all $i\in[m]$. The receiver's action set is $\mathcal A=\{a_0\}\cup\{a_i:i\in[m]\}$, where $a_i$ is the action favorable to sender $i$, and $a_0$ is the safe action. As before, sender $i$ obtains payoff $v_i^+>0$ if the receiver chooses $a_i$, and zero otherwise. The receiver obtains payoff $u^+$ from choosing $a_i$ when $\omega_i=H$, payoff $-u^+$ from choosing $a_i$ when $\omega_i=L$, and payoff zero from $a_0$.

For a posterior $\mu\in\Delta(\Omega)$, let $\mu^i(H):=\sum_{\omega_{-i}\in\{H,L\}^{m-1}}\mu(H,\omega_{-i})$ denote the marginal posterior probability that sender $i$'s state is $H$. Also write $f_i(\mu):=f(a_i\mid \mu)$ for the probability with which the receiver chooses action $a_i$ under posterior $\mu$.
\begin{definition}[Partitionable posterior]
\label{def:partitionable-posterior}
A posterior $\mu\in\Delta(\Omega)$ is called \emph{partitionable} if $\mu^i(H)\in\{0,1\}$ for all $i\in[m]$. For any partitionable posterior $\mu$, define $I_\mu^1:=\{i\in[m]:\mu^i(H)=1\}$ and $I_\mu^0:=\{i\in[m]:\mu^i(H)=0\}$. Then $[m]=I_\mu^1\cup I_\mu^0$ and the two sets are disjoint.
\end{definition}

Since the state space is binary, every partitionable posterior is concentrated on a unique state. Indeed, if $\mu^i(H)\in\{0,1\}$ for every $i$, then each coordinate is almost surely fixed under $\mu$, hence the joint posterior assigns probability one to the unique vector consistent with these fixed coordinates. Thus, for every partitionable $\mu$, there exists a unique $\omega(\mu)\in\Omega$ such that $\mu=\delta_{\omega(\mu)}$. Consequently, $I_\mu^1$ is exactly the set of senders whose realized state is $H$ under this degenerate posterior.

We next define the multi-sender analogue of GPA policies. For each  $i\in[m]$ and each $\omega_{-i}\in\{H,L\}^{m-1}$, let $\sigma_i(\omega_{-i}):=f_i(\delta_{(H,\omega_{-i})})$ denote the probability with which the receiver chooses action $a_i$ at the degenerate posterior where sender $i$'s state is $H$ and the other senders' states are $\omega_{-i}$.

\begin{definition}[Multi-sender General Prior Admissible Policy]
\label{def:mgpa}
An action policy \(f:\Delta(\Omega)\to\Delta(\mathcal A)\) is called a
\emph{Multi-sender General Prior Admissible} (MGPA) policy if, for every posterior
\(\mu\in\Delta(\Omega)\) and every sender \(i\in[m]\), the following conditions hold:
\begin{enumerate}
    \item \(f_i(\mu)=0\), if \(\mu^i(H)=0\);
    \item \(f_i(\mu)=1\), if \(\mu^i(H)=1\) and \(\mu^j(H)=0\) for every \(j\neq i\);
    \item 
   $ f_i(\mu)
    \le
    \sum_{\omega_{-i}\in\{H,L\}^{m-1}}
    \sigma_i(\omega_{-i})\,
    \mu(H,\omega_{-i}).$
    
\end{enumerate}
\end{definition}

Condition 1  says that a sender whose state is certainly low is never selected. Condition 2 says
that if exactly one sender is certainly high, then that sender is selected with probability one.
Condition 3 is the multi-sender analogue of the GPA  bound from
Definition~\ref{def:gpa}; it bounds each sender's action probability by a posterior-linear expression
whose expectation is pinned down by Bayesian plausibility.

\begin{restatable}{theorem}{multequi}
\label{thm:mult}
For any  \(\mu_0\in\operatorname{int}(\Delta(\Omega))\), every MGPA policy admits a fully
informative equilibrium. In particular, truthful signaling is an equilibrium of the sender game
induced by every MGPA policy.
\end{restatable}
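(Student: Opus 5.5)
The proof will follow the two-sender argument of Theorem~\ref{thm:receiver_policy}, replacing the inner product $\langle \Sigma_T,\Phi_T^\mu\rangle$ by the linear functional
\[
\Lambda_i(\mu):=\sum_{\omega_{-i}\in\{H,L\}^{m-1}}\sigma_i(\omega_{-i})\,\mu(H,\omega_{-i}).
\]
Fix an MGPA policy $f$, a sender $i$, and an arbitrary signaling profile $\pi$ over finite signal spaces. In particular, the profile may be $(\pi_i,\pi_{-i}^{\mathrm{tr}})$ with an arbitrary deviation $\pi_i$.

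\textbf{Upper bound for any profile.} As in \eqref{eq:sender-utility-signal-sum}, sender $i$'s payoff is
\[
v_i(\pi;f)=v_i^+\sum_s P_\pi(s)\,f_i(\mu_\pi(\cdot\mid s)).
\]
MGPA condition 3 bounds each term by $\Lambda_i(\mu_\pi(\cdot\mid s))$. Since $\Lambda_i$ is linear in $\mu$, the Bayesian-plausibility identity \eqref{eq:bayes-plausibility} carries over verbatim to $\Omega=\{H,L\}^m$. It gives $\sum_s P_\pi(s)\mu_\pi(\omega\mid s)=\mu_0(\omega)$ for each $\omega$. Exchanging the finite sums then yields
\[
v_i(\pi;f)\le v_i^+\,\Lambda_i(\mu_0).
\]

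\textbf{Truthful payoff.} Under $\pi^{\mathrm{tr}}$ the receiver's posterior is $\delta_\omega$ for the realized $\omega$. If $\omega_i=L$, then $\delta_\omega^i(H)=0$, and condition 1 gives $f_i=0$. If $\omega=(H,\omega_{-i})$, then by the definition of $\sigma_i$ we have $f_i(\delta_\omega)=\sigma_i(\omega_{-i})$. Hence
\[
v_i(\pi^{\mathrm{tr}};f)=v_i^+\sum_{\omega_{-i}}\sigma_i(\omega_{-i})\mu_0(H,\omega_{-i})=v_i^+\Lambda_i(\mu_0).
\]
Combining the two displays, no unilateral deviation from $\pi^{\mathrm{tr}}$ is profitable for any $i$, so $\pi^{\mathrm{tr}}$ is an FIE.

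Condition 2 is consistent with this computation but is not needed here. It forces $\sigma_i(L,\dots,L)=1$, which matters only for receiver optimality, presumably in Theorem~\ref{thm:mult2}.

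\textbf{Remaining details.} The only step needing care is the bookkeeping with conditional posteriors. I will handle signals with $P_\pi(s)=0$ by dropping them from the sums. I will also note that the identity $P_\pi(s)\mu_\pi(\omega\mid s)=\pi(s\mid\omega)\mu_0(\omega)$ uses the product form $\pi(s\mid\omega)=\prod_j\pi_j(s_j\mid\omega_j)$. That product form holds by the independence assumption, but any joint form would do, since only $\sum_s\pi(s\mid\omega)=1$ is used. Full support of $\mu_0$ guarantees that the posteriors are well defined and that truthful signals occur with positive probability.

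I expect no real obstacle. The main conceptual point is that, because condition 3 is linear and tight at the degenerate posteriors, the bound holds with equality at the truthful profile.
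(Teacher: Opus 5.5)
Your proposal is correct and follows the paper's proof essentially step for step. You bound each term by condition 3, collapse the sum via Bayesian plausibility to $v_i^+\sum_{\omega_{-i}}\sigma_i(\omega_{-i})\mu_0(H,\omega_{-i})$, and show truthful signaling attains this bound via condition 1 and the definition of $\sigma_i$; your remark that condition 2 is unused here (it only pins $\sigma_i(L,\dots,L)=1$) is also accurate.
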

\begin{proof}
Fix an MGPA policy \(f\). We show that truthful signaling is an equilibrium.

For a signaling profile \(\pi\), write
\[
\pi(s\mid\omega):=\prod_{i=1}^m \pi_i(s_i\mid \omega_i),
\]
and let
\[
P_\pi(s)
:=
\sum_{\omega\in\Omega}\mu_0(\omega)\pi(s\mid\omega)
\]
be the probability of signal profile \(s=(s_1,\ldots,s_m)\). Whenever \(P_\pi(s)>0\), let
\(\mu_\pi(\cdot\mid s)\) be the posterior induced by \(s\). Signal profiles with \(P_\pi(s)=0\) do
not affect expected utilities.

For sender \(i\), the expected payoff under \(\pi\) and \(f\) can be written as
\[
v_i(\pi;f)
=
v_i^+\sum_s P_\pi(s) f_i(\mu_\pi(\cdot\mid s)).
\]
By MGPA condition (iii),
\[
f_i(\mu_\pi(\cdot\mid s))
\le
\sum_{\omega_{-i}}
\sigma_i(\omega_{-i})
\mu_\pi(H,\omega_{-i}\mid s).
\]
Therefore,
\[
v_i(\pi;f)
\le
v_i^+
\sum_s P_\pi(s)
\sum_{\omega_{-i}}
\sigma_i(\omega_{-i})
\mu_\pi(H,\omega_{-i}\mid s).
\]
Interchanging the sums and using Bayesian plausibility,
\[
\sum_s P_\pi(s)\mu_\pi(H,\omega_{-i}\mid s)
=
\mu_0(H,\omega_{-i}),
\]
we get
\[
v_i(\pi;f)
\le
v_i^+
\sum_{\omega_{-i}}
\sigma_i(\omega_{-i})\mu_0(H,\omega_{-i}).
\tag{10}
\label{eq:mgpa-sender-upper-bound}
\]

Now consider truthful signaling \(\pi^{\mathrm{tr}}\). Under truthful signaling, the receiver observes the
realized state. If \(\omega_i=L\), MGPA condition (i) implies that sender \(i\)'s favorable action is
chosen with probability zero. If \(\omega_i=H\), then the posterior is
\(\delta_{(H,\omega_{-i})}\), and sender \(i\)'s favorable action is chosen with probability
\(\sigma_i(\omega_{-i})\). Hence
\[
v_i(\pi^{\mathrm{tr}};f)
=
v_i^+
\sum_{\omega_{-i}}
\sigma_i(\omega_{-i})\mu_0(H,\omega_{-i}).
\tag{11}
\label{eq:mgpa-truthful-sender-payoff}
\]
Combining \eqref{eq:mgpa-sender-upper-bound} and
\eqref{eq:mgpa-truthful-sender-payoff}, we obtain
\[
v_i(\pi;f)\le v_i(\pi^{\mathrm{tr}};f)
\qquad
\text{for every signaling profile }\pi.
\]
In particular, for every unilateral deviation \(\pi_i\),
\[
v_i(\pi_i,\pi_{-i}^{\mathrm{tr}};f)
\le
v_i(\pi_i^{\mathrm{tr}},\pi_{-i}^{\mathrm{tr}};f).
\]
Thus truthful signaling is a best response for every sender when all other senders are truthful.
Therefore \(\pi^{\mathrm{tr}}\) is an equilibrium. Since \(\pi^{\mathrm{tr}}\) reveals every component of
the state, this equilibrium is fully informative.
\end{proof}

 We next characterize when an MGPA policy also gives the receiver the largest possible payoff at the
truthful equilibrium. For any state \(\omega\in\Omega\), write
 $I^1(\omega):=\{i\in[m]:\omega_i=H\}$ for the set of high-state senders.

\begin{restatable}{theorem}{multequiOPT}
\label{thm:mult2}
Let \(f\) be an MGPA policy. At the truthful equilibrium \(\pi^{\mathrm{tr}}\), the receiver attains the
global payoff upper bound  if and only if, for every partitionable posterior \(\mu\) with \(I_\mu^1\neq\emptyset\), $\sum_{i\in I_\mu^1} f_i(\mu)=1.$  Equivalently, for every state \(\omega\neq(L,\ldots,L)\), $\sum_{i:\omega_i=H}\sigma_i(\omega_{-i})=1.
$
 \end{restatable}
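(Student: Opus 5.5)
We argue exactly as in the proof of Theorem~\ref{thm:receiver-optimal-gpa}: compute the receiver's truthful-equilibrium payoff state by state, compare it with the multi-sender analogue of the upper bound in Proposition~\ref{prop:receiver-upper-bound}, and read off the equality condition from the full-support assumption.

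\textbf{Upper bound.} First we extend Proposition~\ref{prop:receiver-upper-bound} to $m$ senders. In any state $\omega\neq(L,\ldots,L)$ the receiver's payoff is at most $u^+$, whatever action is taken. In state $(L,\ldots,L)$ every action $a_i$ yields $-u^+$ and $a_0$ yields $0$, so the payoff is at most $0$. Taking expectations over $\omega\sim\mu_0$ gives
\[
u(\pi;f)\le u^+\bigl(1-\mu_0(L,\ldots,L)\bigr)
\]
for every $\pi$ and every $f$. This is the ``global payoff upper bound'' in the statement.

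\textbf{Truthful payoff.} Under $\pi^{\mathrm{tr}}$ the posterior is $\delta_\omega$, which is partitionable with $I^1_{\delta_\omega}=I^1(\omega)$. Fix a state $\omega$.
\begin{itemize}
\item For $i\notin I^1(\omega)$, MGPA condition (i) gives $f_i(\delta_\omega)=0$.
\item For $i\in I^1(\omega)$, we have $f_i(\delta_\omega)=\sigma_i(\omega_{-i})$ by definition of $\sigma_i$, and $a_i$ yields $u^+$.
\item The safe action yields $0$.
\end{itemize}
Hence the expected payoff in state $\omega$ is $u^+\sum_{i\in I^1(\omega)}\sigma_i(\omega_{-i})$. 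At $\omega=(L,\ldots,L)$ the sum is empty, so the payoff is $0$. Summing over states,
\[
u(\pi^{\mathrm{tr}};f)=u^+\sum_{\omega\neq(L,\ldots,L)}\mu_0(\omega)\sum_{i:\omega_i=H}\sigma_i(\omega_{-i}).
\]

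\textbf{Comparison.} Subtracting from the upper bound gives the loss
\[
u^+\sum_{\omega\neq (L,\ldots,L)}\mu_0(\omega)\Bigl(1-\sum_{i:\omega_i=H}\sigma_i(\omega_{-i})\Bigr).
\]
Each bracket is nonnegative, because $f(\delta_\omega)$ is a probability distribution over $\mathcal A$ and the $f_i(\delta_\omega)$ are distinct action probabilities, so $\sum_{i}f_i(\delta_\omega)\le 1$. Each weight $\mu_0(\omega)$ is strictly positive, because $\mu_0\in\operatorname{int}(\Delta(\Omega))$. Therefore the loss vanishes if and only if every bracket is zero, i.e.\ $\sum_{i:\omega_i=H}\sigma_i(\omega_{-i})=1$ for every $\omega\neq(L,\ldots,L)$.

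\textbf{Equivalence of the two formulations.} Every partitionable $\mu$ equals $\delta_{\omega(\mu)}$, and $I^1_\mu=I^1(\omega(\mu))$. The condition $I^1_\mu\neq\emptyset$ is the same as $\omega(\mu)\neq(L,\ldots,L)$. Also $f_i(\mu)=\sigma_i(\omega(\mu)_{-i})$ for $i\in I^1_\mu$. Conversely, every state $\omega$ arises as $\omega(\delta_\omega)$. So the partitionable-posterior condition and the state condition coincide.

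None of these steps is difficult. The only points needing care are justifying the nonnegativity of each bracket (it follows from $f(\delta_\omega)$ being a distribution) and using full support so that a vanishing weighted sum forces every term to vanish.
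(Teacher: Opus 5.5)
Your proof is correct and follows essentially the same route as the paper: compute the truthful payoff state by state via MGPA condition (i) and the definition of $\sigma_i$, compare with the bound $u^+\bigl(1-\mu_0(L,\ldots,L)\bigr)$, and use $\sum_i f_i(\delta_\omega)\le 1$ together with full support of $\mu_0$ to get the equality condition. You then translate to partitionable posteriors by noting they are point masses. Writing the gap explicitly as a nonnegative weighted sum is only a slightly more explicit version of the paper's ``equality holds if and only if'' step.
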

\begin{proof}
Under truthful signaling, the receiver learns the realized state. If the realized state is
\(\omega=(L,\ldots,L)\), MGPA condition (i) implies that
\[
f_i(\delta_\omega)=0
\qquad
\forall i\in[m].
\]
Thus the receiver chooses \(a_0\) with probability one and obtains payoff zero.

Now consider any state \(\omega\neq(L,\ldots,L)\). The receiver obtains payoff \(u^+\) from action
\(a_i\) exactly for senders \(i\in I^1(\omega)\), and obtains payoff \(-u^+\) from action \(a_i\) for
senders \(i\notin I^1(\omega)\). However, MGPA condition (i) implies
\[
f_i(\delta_\omega)=0
\qquad
\text{for all } i\notin I^1(\omega).
\]
Therefore the receiver's expected payoff at state \(\omega\) is 
$u^+\sum_{i\in I^1(\omega)}f_i(\delta_\omega).$ 
Hence
\[
u(\pi^{\mathrm{tr}};f)
=
u^+
\sum_{\omega\neq(L,\ldots,L)}
\mu_0(\omega)
\sum_{i\in I^1(\omega)}f_i(\delta_\omega).
\tag{12}
\label{eq:receiver-mult-payoff}
\]

For every state \(\omega\neq(L,\ldots,L)\), the receiver's payoff is at most \(u^+\), and in state
\((L,\ldots,L)\), the receiver cannot obtain positive payoff. Therefore the global receiver-payoff
upper bound is
\[
u^+\bigl(1-\mu_0(L,\ldots,L)\bigr).
\]
Since \(f\) is an action policy,
\[
\sum_{i\in I^1(\omega)}f_i(\delta_\omega)\le 1
\qquad
\text{for every }\omega.
\]
Thus equality with the global upper bound holds if and only if
\[
\sum_{i\in I^1(\omega)}f_i(\delta_\omega)=1
\qquad
\text{for every }\omega\neq(L,\ldots,L).
\]
Because \(\mu_0\in\operatorname{int}(\Delta(\Omega))\), every state has positive prior probability,
so this condition is necessary as well as sufficient. Rewriting the same condition in terms of
partitionable posteriors gives
\[
\sum_{i\in I_\mu^1}f_i(\mu)=1
\qquad
\text{for every partitionable }\mu\text{ with }I_\mu^1\neq\emptyset.
\]
Equivalently, since \(f_i(\delta_\omega)=\sigma_i(\omega_{-i})\) whenever \(\omega_i=H\),
\[
\sum_{i:\omega_i=H}\sigma_i(\omega_{-i})=1
\qquad
\text{for every }\omega\neq(L,\ldots,L).
\]
This completes the proof.
\end{proof}
 Our next  example gives a simple receiver-optimal MGPA policy.

\begin{example}
\label{ex:mgpa-general}
Let $[m]$ be the set of senders. For each $i\in[m]$, define $f_i(\mu):=\sum_{\omega\in\Omega:\,\omega_i=H}\frac{1}{|\{j:\omega_j=H\}|}\,\mu(\omega)$ (with the convention that the summand is zero when $\{j:\omega_j=H\}=\emptyset$), and let $f_0(\mu):=1-\sum_{i\in[m]}f_i(\mu)$. Then $f=(f_0,f_1,\dots,f_m)$ is an MGPA policy.
\end{example}

At any state $\omega$, if $k=|\{j:\omega_j=H\}|$ denotes the number of high senders, then each high sender receives probability $1/k$, while low senders receive zero. Hence $\sum_{i:\omega_i=H} f_i(\delta_\omega)=1$ for every $\omega\neq L^m$. By Theorem~\ref{thm:mult}, this policy admits a fully informative equilibrium for every prior, and by Theorem~\ref{thm:mult2}, the truthful equilibrium attains the receiver’s global payoff upper bound. 
%\newpage

%% file: conclusion.tex
\section{Discussion }
\label{sec:generalizations}

In this paper, we adopt a mechanism-design perspective on Bayesian persuasion, in which the receiver commits to an action policy that induces a game among senders competing for favorable actions. While most prior work focuses on maximizing the sender's utility with the receiver playing a passive role, we consider the receiver a mechanism designer who can commit to strategies that reveal the truth in equilibrium. Future extensions include handling non-binary ground truths, analyzing the welfare impact of additional senders, and incorporating partially informed senders aware of others' ground truths.

 A key assumption in the paper is that senders lack any knowledge of others’ ground truths, though in practice they may have limited side information. The presented theoretical analysis also lacks robustness analysis for estimation and posterior updates, and their influence on strategies and outcomes. The model   assumes a fully rational receiver -capable of computing complex optimal strategies. Finally, the paper does not conduct any social experiment to validate the effects of implementing the proposed mechanism in real-world scenarios. Enriching the proposed framework with careful robustness analysis, incorporating side information, considering boundedly rational agents, and conducting real-world experiments are left for future work.  